\definecolor{winered}{rgb}{0.6,0.1,0.1}
\newtheorem{theorem}{Theorem}[section]
\newtheorem{corollary}[theorem]{Corollary}
\newtheorem{definition}[theorem]{Definition}
\newtheorem{lemma}[theorem]{Lemma}
\newtheorem{proposition}[theorem]{Proposition}
\newtheorem*{rep@theorem}{\rep@title}
\newcommand{\newreptheorem}[2]{%
\newenvironment{rep#1}[1]{%
 \def\rep@title{#2 \ref{##1}}%
 \begin{rep@theorem}}%
 {\end{rep@theorem}}}
\crefname{algorithm}{Algorithm}{Algorithms}
\crefname{proposition}{Proposition}{Propositions}
\crefname{observation}{Observation}{Observations}
\crefname{corollary}{Corollary}{Corollaries}
\crefname{theorem}{Theorem}{Theorem}
\crefname{lemma}{Lemma}{Lemmas}
\crefname{claim}{Claim}{Claims}
\crefname{example}{Example}{Examples}
\crefname{property}{Property}{Properties}
\crefname{condition}{Cond}{Cond}
\crefname{definition}{Definition}{Definitions}
\newcommand{\voters}[1]{R_A(c_#1)}
\newcommand{\leqref}[1]{\overset{\eqref{#1}}{\leq}}
\newcommand{\leqrefs}[2]{\overset{\eqref{#1},\eqref{#2}}{\leq}}
\newcommand{\distance}{d(c_o, c_w)}
\newcommand{\distancev}[1]{d(v, c_#1)}
\newcommand{\sumv}[2]{\sum_{v \in #1} d(v, c_#2)}
\newcommand{\sumvn}[2]{\sum_{v \notin #1} d(v, c_#2)}
\newcommand{\reals}{{{\mathbb{R}}}}
\title{Approval-Based Elections and Distortion of Voting Rules}
\author{Grzegorz Pierczyński\\
  University of Warsaw\\
  Warsaw, Poland
  \and 
Piotr Skowron\\
  University of Warsaw\\
  Warsaw, Poland
}
\date{}
\begin{document}

\maketitle

\begin{abstract}
We consider elections where both voters and candidates can be associated with
points in a metric space and voters prefer candidates that are closer to those that are farther away.
It is often assumed that the optimal candidate is the one that minimizes the total distance to the voters.
Yet, the voting rules often do not have access to the metric space $M$ and only see preference rankings induced by $M$.
Consequently, they often are incapable of selecting the optimal candidate. 
The distortion of a voting rule measures the worst-case loss of the quality being the result of having access only to preference rankings.
We extend the idea of distortion to approval-based preferences. First, we compute the distortion of Approval Voting. Second,
we introduce the concept of acceptability-based distortion---the main idea behind is that the optimal candidate is the one
that is acceptable to most voters. We determine acceptability-distortion for a number of rules, including Plurality, Borda,
$k$-Approval, Veto, the Copeland's rule, Ranked Pairs, the Schulze's method, and STV.
\end{abstract}

\section{Introduction}\label{sec:intro}

We consider the classic election model: we are given a set of \emph{candidates}, a set of \emph{voters}---the voters have preferences over the candidates---and the goal is to select the \emph{winner}, i.e., the candidate that is (in some sense) most preferred by the voters. The two most common ways in which the voters express their preferences is (i) by ranking the candidates from the most to the least preferred one, or (ii) by providing \emph{approval sets}, i.e., subsets of candidates that they find acceptable. The collection of rankings (resp.\ approval sets), one for each voter, is called a ranking-based (resp.\ approval-based) profile.
There exist a plethora of rules that define how to select the winner based on a given preference profile, and comparing these election rules is one of the fundamental questions of the social choice theory~\cite{arr-sen-suz:b:handbook-of-social-choice}.

One such approach to comparing rules, proposed by \citet{pro-ros:c:distortion}, is based on the concept of \emph{distortion}.
Hereinafter, we explore its metric variant~\cite{ans-bha-elk-pos-sko:j:distortion}: the main idea is to assume that the voters and the candidates are represented by points in a metric space $M$ called the \emph{issue space}. The optimal candidate is the one that minimizes the sum of the distances to all the voters. However, the election rules do not have access to the metric space $M$ itself but they only see the ranking-based profile induced by $M$: in this profile the voters rank the candidates by their distance to themselves, preferring the ones that are closer to those that are farther. Since the rules do not have full information about the metric space they cannot always find optimal candidates. The distortion quantifies the worst-case loss of the utility being effect of having only access to rankings. Formally, the distortion of a voting rule is the maximum, over all metric spaces, of the following ratio: the sum of the distances between the elected candidate and the voters divided by the sum of the distances between the optimal candidate and the voters.

The concept of distortion is interesting, yet---in its original form---it only allows to compare ranking-based rules. In this paper we extend the distortion-based approach so that it captures approval preferences. In the first part of the paper we analyze the distortion of Approval Voting (AV), i.e., the rule that for each approval-based profile $A$ returns the candidate that belongs to the most approval sets from $A$. To formally define the distortion of AV one first needs to specify, for each metric space $M$, what is the approval-based profile induced by $M$. 
Here, we assume that each voter is the center of a certain ball and approves all the candidates within it.
We can see that each metric space induces a (possibly large) number of approval-based profiles---we obtain different profiles for different lengths of radiuses of the balls. This is different from ranking-based profiles, where (up to tie-breaking) each metric space induced exactly one profile. Thus, the distortion of AV might depend on how many candidates the voters decide to approve. Indeed, it is easy to observe that if each voter approves all the candidates, then the rule can pick any of them, which results in an arbitrarily bad distortion. On the other hand, by an easy argument we will show that for each metric space $M$ there exists an approval-based profile $A$ consistent with $M$, such that AV for $A$ selects the optimal candidate. In other words: AV can do arbitrarily well or arbitrarily bad, depending on how many candidates the voters approve.

Our first main contribution is that we fully characterize how the distortion of AV depends on the length of radiuses of approval balls. Specifically, we show that the distortion of AV is equal to 3, when the lengths of approval radiuses of the voters are all equal and such that the optimal candidate is approved by between $\nicefrac{1}{4}$ and $\nicefrac{1}{2}$ of the population of the voters (and this is the optimal distortion for the case of radiuses of equal length). The exact relation between the number of voters approving the optimal candidate and the distortion of AV is depicted in \Cref{fig:approval_voting_global_upper_bound}.

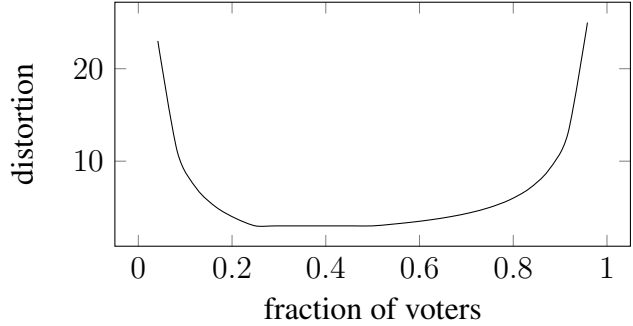
\begin{figure}[tp]
    \centering
    \begin{tikzpicture}[
              declare function={
    func(\x)= (\x<=0) * (500)   +
     and(\x>0, \x<=0.25) * ((1 - \x)/(\x))     +
     and(\x>0.25, \x<=0.5) * (3)     +
     and(\x>0.5,  \x<1) * ((2 - \x)/(1 - \x)) +
                (\x>=1) * (500);
  }
]
          \begin{axis}[ 
            xlabel={fraction of voters},
            ylabel={distortion},
            restrict y to domain=0:100,
            y=3.5,
          ] 
        \addplot[domain=0:1, smooth] {func(x)}; 
      \end{axis}
    \end{tikzpicture}
    \caption{The relation between the fraction of voters approving the optimal candidate and the distortion of AV, for the case when the approval radiuses of the voters have all equal lengths.}
    \label{fig:approval_voting_global_upper_bound}
\end{figure}

In the second part of the paper we explore the following related idea: assume that the goal of the election rule is not to select the candidate minimizing the total distance to the voters, but rather to pick the one that is acceptable for most of them. E.g., AV perfectly implements this idea. A natural question is how good are ranking-based rules with respect to this criterion. To answer this question we introduce a new concept of \emph{acceptability-based distortion} (in short, \emph{ab-distortion}). We assume that each metric space, apart from the points corresponding to the voters and candidates, contains acceptability balls---one for each voter (as before, each voter is the center of the corresponding ball). The optimal candidate is the one that belongs to the most acceptability balls, and the ab-distortion distortion measures the normalized difference between the numbers of balls to which the elected and the optimal candidates belong. The ab-distortion is a real number between 0 and~1, where~0 corresponds to selecting the optimal candidate and~1 is the worst possible value~\footnote{The reader might wonder why we define the ab-distortion as a difference rather than as a ratio (as it is done for the classic definition of the distortion). Indeed, we first used the ratios in our definition, but then it was very easy to construct instances where any rule had the distortion of $+\infty$. Further, we found that these results do not really speak of the nature of the rules but rather are artifacts of the used definition. Consequently, we found that the considering the difference gives more meaningful results.}.

Among the ranking-based rules that we consider in this paper, the best (and the optimal) ab-distortion is attained by Ranked Pairs and the Schulze's method. It is an open question, whether they are the only natural rules with this property. It is worth mentioning, that its ab-distortion is closely related to the size of the Smith set, so in case it is small (in particular, when the Condorcet winner exists) these rules have even better ab-distortion. We have found an interesting result for the Copeland's rule. Although in case of classic (distance-based) distortion most Condorcet rules are equally good, this is no longer the case when acceptability is the criterion we primarily care about. The ab-distortion of the Copeland's rule is equal to 1, which is the worst possible value. This rule is optimal only if the Condorcet winner exists (e.g. when the metric space is one-dimensional). The distortion of scoring rules (Plurality, Borda, Veto, k-approval) is significantly worse that for Ranked Pairs. An another surprising result is the distortion of STV---while this rule is known to achieve a very good distance-based distortion, its ab-distortion is even worse than for Plurality (denoting the number of candidates as $m$, STV and Plurality achieve the ab-distortion of $\frac{2^m-1}{2^m}$ and $\frac{m-1}{m}$, respectively). In case of all these rules the worst-case instances were obtained in one-dimensional Euclidean metric spaces. Our results are summarized in \Cref{tab:summary}.

\begin{table}
\centering
\begin{tabular}{l l l}
\toprule
\textbf{Rule} & \textbf{Distance-based distortion ($[1;+\infty]$)} & \textbf{Ab-distortion ($[0;1]$)}\\
\midrule
Each rule & $\geq 3$ & $\geq \frac{\ell-1}{\ell}$ for $\ell > 1$\\
& & $\geq \frac{1}{2}$ otherwise\\
\midrule
Plurality & $2m-1$ & $\frac{m-1}{m}$\\
\midrule
Borda & $2m-1$ & $\frac{m-1}{m}$\\
\midrule
k-approval & $\infty$ & $1$\\
\midrule
Veto & $\infty$ & $1$\\
\midrule
Copeland & $5$ & $1$ for $\ell > 1$\\
& & $\frac{1}{2}$ otherwise\\
\midrule
Ranked Pairs & $5$ & $\frac{\ell-1}{\ell}$ for $\ell > 1$\\
& & $\frac{1}{2}$ otherwise\\
\midrule
Schulze's Rule & $5$ & $\frac{\ell-1}{\ell}$ for $\ell > 1$\\
& & $\frac{1}{2}$ otherwise\\
\midrule
STV & $O(\ln m)$ & $\frac{2^{m-1}-1}{2^{m-1}}$\\
\bottomrule
\end{tabular}
\caption{The comparison of the distortion for various ranking-based rules. The results in the left column (for the distance-based distortion) are known in the literature. The results for ab-distortion are new to this paper; here, $m$ denotes the number of the candidates and $\ell$ is the size of the Smith set.}
\label{tab:summary}
\end{table}

\section{Preliminaries}\label{Preliminaries}
For each set $S$ by $2^S$ and $\Pi(S)$ we denote, respectively, the powerset of $S$ and the set of all linear orders over $S$. By $S^\complement$ we denote the complement set of $S$, and by $S^*$---the set of all vectors with the elements from $S$.
For each two sets $S_1, S_2$ and a function $f\colon S_1 \to 2^{S_2}$ by $R_f\colon S_2 \to 2^{S_1}$ we denote the function defined as follows:
\begin{align*}
    \forall y \in S_2 \quad R_f(y) = \{x \in S_1\colon y \in f(x)\}
\end{align*}

For convenience we assume that $[-\infty;+\infty]$ denotes the affinely extended real number system (the set of real numbers $\mathbb{R}$ with additional symbols $+\infty$, and $-\infty$). We take the following convention for arithmetical operations:
\begin{align*}
    \forall a \in \mathbb{R}\quad\frac{a}{\pm\infty}=0 \qquad \forall a \in (0;+\infty]\quad\frac{\pm a}{0}=\pm\infty \text{.}
\end{align*}
Expressions $\frac{0}{0}$, $\frac{\pm\infty}{\pm\infty}$, $0\cdot\pm\infty$ and $\pm\infty-\pm\infty$ are undefined.

\subsection{Our Metric Model}
An \emph{election instance} is a tuple $(N, C, d, \lambda)$, where $N = \{1, 2, \ldots, n\}$ is the set of \emph{voters}, $C=\{c_1, c_2, \ldots, c_m\}$, is the set of \emph{candidates}, $d \colon (N \cup C)^2 \to \mathbb{R}$ is a \emph{distance function} ($d$ allows us to view the candidates and the voters as points in a pseudo-metric space), and $\lambda\colon N \to 2^C$ is an \emph{acceptability function}, mapping each voter $i \in N$ to a subset of candidates that $i$ finds acceptable. We assume that $\lambda$ is \emph{nonempty}, i.e., for each $i \in N$, $\lambda(i) \neq \emptyset$, and that is \emph{local consistent}---for each $i \in N$, $c_a, c_b \in C$, if $c_a \in \lambda(i)$ and $d(i, c_b) \leq d(i, c_a)$, then $c_b \in \lambda(i)$. Often we will also require that $\lambda$ satisfies a stronger condition, called \emph{global consistency}---for each $i, j \in N$, $c_a, c_b \in C$, if $c_a \in \lambda(i)$ and $d(j, c_b) \leq d(i, c_a)$, then $c_b \in \lambda(j)$.
Intuitively, local-consistency means that for each voter $i \in N$ we can associate $\lambda(i)$ with a ball with the center at the point of this voter. A voter $i \in N$ considers a candidate $c_j$ to be acceptable for him, $c_j \in \lambda(i)$, if and only if $c_j$ lies within the ball. Such a ball will be further called the \emph{acceptability ball} and its radius---the \emph{acceptability radius}.
Then, global consistency can be interpreted as an assumption that all the acceptability radiuses have equal lengths.

We will sometimes slightly abuse the notation: by saying that an instance satisfies local (global) consistency we will mean that the acceptability function in the instance satisfies the respective property.  

By $\mathbb{I}$, we denote the set of all election instances. Since issue spaces are often argued to be Euclidean spaces with small numbers of dimensions, we additionally introduce the following notation: for each $k \in \mathbb{N}$ let $\mathbb{E}^k$ denote the set of all the instances where the elements of $N$ and $C$ are associated with points from $\mathbb{R}^k$, and $d$ is the Euclidean distance.

\subsection{Preference Representation}

In most cases, it is difficult for the voters to explicitly position themselves in the issue space, and often even the space itself is unknown. Therefore, we will consider voting rules that take as inputs preference profiles induced by election instances, instead of instances themselves. We consider two classic approaches to represent preferences.

\begin{description}
\item[Ranking-based profiles.] A \emph{ranking-based profile} induced by an election instance $I=(N, C, d, \lambda)$ is the function $\succcurlyeq_I\colon N \to \Pi(C)$, mapping each voter to a linear order over $C$ such that for all $i \in N$ and all $c_a, c_b \in C$ if $d(i, c_a) < d(i, c_b)$ then $c_a \succcurlyeq_i c_b$. For each voter $i \in N$, the relation ${\succcurlyeq_I}(i)$ (for convenience also denoted as $\succcurlyeq_i$, whenever the instance is clear from the context) is called the \emph{preference order} of $i$. If for some $c_x, c_y \in C$ it holds that $c_x \succcurlyeq_i c_y$, we say that $i$ prefers $c_x$ over~$c_y$.

\item[Approval-based profiles.] An \emph{approval-based profile} of an election instance $I=(N, C, d, \lambda)$ is a locally consistent acceptability function $A_I\colon N \to 2^C$. We say that a candidate $c_x$ is \emph{approved} by a voter $i \in N$ if $c_x \in A(i)$. 
We will say that the approval-based profile is \emph{truthful} if for all $i \in N$ it holds that $A(i) = \lambda(i)$.
\end{description}

Let us introduce some additional useful notation. Let $P\colon C^* \to N$ be a function mapping vectors of distinct candidates to sets of voters as follows:
\begin{equation*}
    P((c_{i_1}, c_{i_2}, ..., c_{i_k})) = \{v \in N : c_{i_1} \succcurlyeq_v c_{i_2} \succcurlyeq_v \ldots \succcurlyeq_v c_{i_k} \}
\end{equation*}

For convenience, we will write $P(c_{i_1}, c_{i_2}, ..., c_{i_k})$ instead of $P((c_{i_1}, c_{i_2}, ..., c_{i_k}))$\footnote{It will always be clear from the context whether in the inscription $P(x)$, $x$ should be interpreted as a vector or as a candidate.}. Note that for all $c_a, c_b \in C$ we have $P(c_a, c_b) \cap P(c_b, c_a) = \emptyset$ and $P(c_a, c_b) \cup P(c_b, c_a) = N$.

We say that $c_a$ \emph{dominates} $c_b$ if $|P(c_a, c_b)| > \frac{n}{2}$ and that $c_a$ \emph{weakly dominates} $c_b$ if $|P(c_a, c_b)| \geq \frac{n}{2}$. We say that a candidate $c_x$ \emph{Pareto-dominates} a candidate $c_y$ if there holds that $|P(c_x, c_y)| = n$. A candidate $c_y$ is \emph{Pareto-dominated} if there exists a candidate $c_x$ who Pareto-dominates $c_y$.

\subsection{Definitions of Voting Rules}

An \emph{election rule} (also referred to as a \emph{voting rule}) is a function mapping each preference profile to a set of tied \emph{winners}. We distinguish \emph{ranking-based} rules---taking ranking-based profiles as arguments, and approval-based rules---defined analogously. 
Among approval-based rules, we focus on \emph{Approval Voting} (AV)---the rule that selects those candidates that are approved by most voters.
In the remaining part of this subsection we recall definitions of the ranking-based rules that we study in this paper.

\paragraph{Positional scoring rules}
For a given vector $\vec{s} = (\alpha_1, \alpha_2, \alpha, \alpha_m)$, the scoring rule implemented by $\vec{s}$ works as follows.
A candidate $c_a$ gets $\alpha_i$ points from each voter $j$ who puts $c_a$ in the $i$th position in $\succcurlyeq_j$. The rule elects the candidates whose total number of point, collected from all the voters, is maximal. Some well-known scoring rules which we will study in the further part of this work are the following:
\begin{description}
    \item \textbf{Plurality:} $\vec{s} = (1, 0, \ldots, 0)$,
    \item \textbf{Veto:} $\vec{s} = (1,\ldots, 1, 0)$,
    \item \textbf{Borda:} $\vec{s} = (m-1, m-2, \ldots, 1, 0)$,
    \item \textbf{k-approval:} $\vec{s} = (\underbrace{1, \ldots, 1}_k, 0, \ldots, 0)$ (for $1 \leq k \leq m$).
\end{description}

\paragraph{The Copeland's Rule}
The \emph{Copeland's rule} elects candidates $c_w$ who dominate at least as many candidates as any other candidate. More formally, a candidate $c_w$ is a winner if and only if:
\begin{equation*}
    \forall c_x \in C \quad |\{c\colon |P(c_w, c)| > \frac{n}{2}\}| \geq |\{c\colon |P(c_x, c)| > \frac{n}{2}\}|
\end{equation*}

\paragraph{Ranked Pairs}
\emph{Ranked Pairs} works as follows: first we sort the pairs of candidates $(c_i, c_j)$ in the descending order of the values $|P(c_i, c_j)|$. Then, we construct a graph~$G$ where the vertices correspond to the candidates. We start with the graph with no edges; then we iterate over the sorted list of pairs---for each pair $(c_i, c_j)$ we add an edge from $c_i$ to $c_j$ unless there is already a path from $c_j$ to $c_i$ in $G$. If such a path exists, we simply skip this pair.
Clearly, the so-constructed graph $G$ is acyclic. The source nodes of $G$ are the winners. 

\paragraph{The Schulze's Rule}
The \emph{Schulze's rule} works as follows: let the \emph{beatpath} of length $k$ from candidate $c_a$ to $c_b$ be a sequence of candidates $c_{x_1}, c_{x_2},\ldots, c_{x_{k-1}}$ such that $c_a$ dominates $c_{x_1}$, $c_{x_{k-1}}$ dominates $c_b$ and for each $i \in \{1,\ldots, k-2\}$, $c_{x_i}$ dominates $c_{x_{i+1}}$. Let the \emph{strength} of the beatpath be the minimum of values $P(c_a, c_{x_1})$, $P(c_{x_1}, c_{x_2}), \ldots, P(c_{x_{k-1}}, c_b)$. By $p[c_a, c_b]$ we denote the maximum of strenghts of all beatpaths from $c_a$ to $c_b$. Candidate $c_w$ is the winner if and only if for each candidate $c$ it holds that $p[c_w, c] \geq p[c, c_w]$.



\paragraph{STV}
\emph{Single Transferable Vote (STV)} works iteratively as follows: if there is only one candidate, elect this candidate. Otherwise, eliminate the candidate who has the least points according to the Plurality rule and repeat the algorithm. 

\medskip
Note that the aforementioned rules are irresolute by definition. Further, we did not specify the tie-breaking rule used when sorting edges in Ranked Pairs and when eliminating candidates in STV. We will make all these rules resolute by using the lexicographical tie-breaking rule, denoted by $\succcurlyeq_{lex}$.

\subsection{Measuring the Quality of Social Choice}
In this section we formalize the concept of distortion that, on the intuitive level, we already introduced in \Cref{sec:intro}.


\paragraph{Distance-based approach}
A natural idea to relate the quality of a candidate $c$ with the sum of the distances from this candidate to all the voters. The lower this sum is, the higher the quality. 
Following this intuition, the \emph{distortion} of a voting rule~$\varphi$ in instance $I\in \mathbb{I}$, is defined as follows (below, $c_o$ denotes the optimal candidate for $I$):
\begin{equation*}
    D_I(\varphi) = \max_{p \in \mathcal{P}_I}\frac{ \sum_{i \in N} d(i, \varphi(p))}{ \sum_{i \in N} d(i, c_o)}\text{,}
\end{equation*}
where $\mathcal{P}_I$ is the set of profiles induced by $I$ (either ranking or approval, depending on the domain of $\varphi$).
$D(\varphi) \in [1;+\infty]$.

This approach can be applied to any rule discussed so far. For ranking-based rules it has already been widely studied in the literature, hence in the further part we will focus on AV.

\paragraph{Acceptability-based approach}
Now we present an alternative way to measure the quality of candidates, based on the acceptability function. Intuitively, the more voters a candidate $c$ is acceptable for, the higher his quality. Besides, we would like the maximal possible quality not to depend on the number of voters. 
Therefore, we define the \emph{acceptability-based distortion} (\emph{ab-distortion}, in short) of a voting rule $\varphi$ in instance $I\in \mathbb{I}$ as the following expression:
\begin{equation*}
    D_I(\varphi) = \max_{p\in \mathcal{P}_I} \frac{R_\lambda(c_o) - R_\lambda(\varphi(p))}{n}\text{,}
\end{equation*}
where $\mathcal{P}_I$ is the set of profiles induced by $I$ (either ranking-based or approval-based, depending on the domain of $\varphi$).
Clearly, the ab-distortion is always a value from $[0;1]$. By definition, Approval Voting always elects an optimal candidate in terms of ab-distortion. Thus, we will consider our acceptability-based measure only for ranking-based rules.

\medskip
Let $E$ be an expression that can depend on characteristics of an instance (e.g., on the number of candidates, or size of the Smith set). We say that the (acceptability-based) distortion of a rule $\varphi$ is $E$, if for each instance $I$, $D_I(\varphi) \leq E$ and for each $E$ there is an instance $I$ with $D_I(\varphi) = E$.


\section{Distortion of Approval Voting}

In this section we analyze the distance-based distortion of Approval Voting (AV)---hereinafter we denote AV by $\varphi_{AV}$.

We start by showing that in the most general case, if we do not make any additional assumptions about the acceptability function, the distortion of AV can be arbitrarily bad.  

\begin{proposition}\label[proposition]{prop:approval_general_worst_distortion}
There exists an instance $I \in \mathbb{E}^1$ such that $D_I(\varphi_{AV}) = +\infty$.
\end{proposition}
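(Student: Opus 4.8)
The plan is to exhibit a one-dimensional Euclidean instance where Approval Voting is forced to select a candidate arbitrarily far from the optimal one, while local consistency of $\lambda$ is preserved. The key insight is that the distortion of AV is computed as a maximum over \emph{all} approval-based profiles induced by $I$ (the set $\mathcal{P}_I$), so it suffices to find a single locally consistent profile $A$ under which $\varphi_{AV}$ ranks a far-away candidate at least as high as the optimal one, and then push the distance of the bad candidate to infinity (or simply make the ratio unbounded across a family of instances achieving the supremum).

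First I would place two candidates on the real line: the optimal candidate $c_o$ near the bulk of the voters and a second candidate $c_b$ very far away. Concretely, I would position all (or a large majority of) voters at or near a single point together with $c_o$, and place $c_b$ at distance $D$ for large $D$. The optimal candidate $c_o$ minimizes total distance by a wide margin. The goal is then to engineer a locally consistent approval profile in which $c_b$ receives at least as many approvals as $c_o$, so that AV may elect $c_b$ under lexicographic or adversarial tie-breaking.

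The main obstacle — and the crux of the construction — is reconciling \emph{local consistency} with the requirement that the far candidate be widely approved. Local consistency forces each voter who approves $c_b$ to \emph{also} approve every candidate at least as close as $c_b$, in particular $c_o$ if $c_o$ is closer. So I cannot simply have voters approve only the distant candidate. The standard way around this is to add voters positioned \emph{beyond} $c_b$ (farther from $c_o$ than $c_b$ is), or to place voters so that for them $c_b$ is in fact the nearest candidate, and give these voters large acceptability balls; alternatively, one arranges the geometry so that the set of voters approving $c_b$ is at least as large as the set approving $c_o$ while each such approval respects the ball structure. A clean approach: split the electorate so that $c_o$ is approved by exactly half the voters and $c_b$ by the other half (each voter approving only their nearest candidate via a tiny radius), giving a tie that tie-breaking can resolve in favour of $c_b$; then let $D \to \infty$ while keeping $\sum_i d(i, c_o)$ bounded, so the ratio diverges.

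Finally I would verify the two bookkeeping points: that the acceptability function built is genuinely locally consistent (every approval ball is a true distance ball, so if a voter approves $c_b$ then every strictly closer candidate is approved too), and that $c_o$ is indeed the distance-optimal candidate so the denominator $\sum_i d(i,c_o)$ stays finite and bounded away from the numerator. Combining these, as the distant candidate recedes the ratio $\frac{\sum_i d(i,\varphi_{AV}(A))}{\sum_i d(i,c_o)}$ grows without bound, establishing $D_I(\varphi_{AV}) = +\infty$ for a suitable instance $I \in \mathbb{E}^1$ (or across a family whose supremum is $+\infty$). I expect the only delicate part to be choosing voter positions and radii so that the approval counts tie or favour the bad candidate \emph{without} violating local consistency — everything else is routine.
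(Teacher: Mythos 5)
There is a genuine gap here, and it is exactly at the point you flag as ``the only delicate part.'' The paper's definition requires a \emph{single} instance $I$ with $D_I(\varphi_{AV}) = +\infty$, where $D_I$ is a maximum over profiles induced by that fixed $I$. Your main mechanism --- pushing the bad candidate to distance $D$ and letting $D \to \infty$ --- only produces a family of instances each with finite distortion whose supremum is $+\infty$; it never exhibits one instance attaining $+\infty$. The only way to get an actual $+\infty$ for a fixed instance is to make the denominator $\sum_i d(i,c_o)$ equal to zero while the numerator is positive (the paper's convention $\tfrac{a}{0}=+\infty$ for $a>0$), and that is precisely what the paper does: all $n$ voters sit at the same point as $c_1$, a second candidate $c_2$ sits at distance $1$, every voter approves \emph{both} candidates, and lexicographic tie-breaking hands the win to $c_2$, giving $\tfrac{n}{0}=+\infty$.

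Your proposed ``clean approach'' also fails on its own terms. If you split the electorate so that each half approves only its nearest candidate with a tiny radius, then either the split is exactly even --- in which case both candidates have the same total distance, so $c_b$ is also optimal and the distortion is $1$ --- or a strict majority sits near $c_o$, in which case $c_o$ collects strictly more approvals and wins. More generally, local consistency forces any voter who approves $c_b$ but not $c_o$ to satisfy $d(i,c_b) < d(i,c_o)$, hence $d(i,c_o) > \distance/2$; so as soon as even one such voter exists, the denominator is at least $\distance/2$ and the ratio is bounded by $2n$ for that instance. The tie you need can therefore only come from every voter approving \emph{both} candidates (so that $R_A(c_b)=R_A(c_o)=N$), not from a geographic split --- which collapses your construction back to the paper's degenerate one. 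You correctly identified local consistency as the obstacle, but the resolution you sketch does not overcome it.
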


This result is rather pessimistic. However, one could ask a somehow related question---does there for each instance $I$ always exist an approval profile consistent with $I$ that would result in a good distortion? In contrast to \Cref{prop:approval_general_worst_distortion}, here the answer is much more positive.

\begin{proposition}\label[proposition]{prop:approval_general_best_distortion}
For each instance $I \in \mathbb{I}$, there is an approval based profile $p$ consistent with $I$ such that $\varphi(p)$ is the optimal candidate (minimizing the total distance to voters).
\end{proposition}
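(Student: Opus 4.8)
The plan is to exhibit an explicit approval-based profile built around an optimal candidate and then show that Approval Voting cannot avoid selecting an optimal candidate on it. Fix any optimal candidate $c_o$, i.e., one minimizing $\sum_{i \in N} d(i, c_o)$. Define the profile $A$ by letting each voter $i \in N$ approve exactly those candidates that are at least as close as $c_o$:
\begin{equation*}
    A(i) = \{c \in C \colon d(i, c) \leq d(i, c_o)\}\text{.}
\end{equation*}
The intuition is that this is the smallest ball around $i$ that still contains $c_o$, so $c_o$ is guaranteed to be approved by everyone while as few competitors as possible share that universal support.

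Next I would check that $A$ is a legitimate approval-based profile consistent with $I$, i.e., that it is nonempty and locally consistent. Nonemptiness is immediate since $c_o \in A(i)$ for every $i$. Local consistency holds because $A(i)$ is exactly a ball of radius $d(i, c_o)$ centered at $i$: if $c_a \in A(i)$ and $d(i, c_b) \leq d(i, c_a)$, then $d(i, c_b) \leq d(i, c_a) \leq d(i, c_o)$, so $c_b \in A(i)$. In particular, the acceptability radius of voter $i$ under $A$ is $d(i, c_o)$. Since every voter approves $c_o$, we get $|R_A(c_o)| = n$, so the maximal approval score in the profile is exactly $n$ and $c_o$ attains it.

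The only remaining issue, and the one place that needs care, is tie-breaking: some other candidate might also collect all $n$ approvals, and the resolute version of AV (using $\succcurlyeq_{lex}$) could then select that candidate instead of $c_o$. I would dispose of this by showing that any such tied candidate is itself optimal. Indeed, suppose $c$ satisfies $|R_A(c)| = n$; then $c \in A(i)$ for every $i$, which by the definition of $A$ means $d(i, c) \leq d(i, c_o)$ for all $i \in N$. Summing over all voters gives
\begin{equation*}
    \sum_{i \in N} d(i, c) \leq \sum_{i \in N} d(i, c_o)\text{,}
\end{equation*}
and since $c_o$ minimizes the total distance this inequality is in fact an equality, so $c$ is also optimal. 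Hence every candidate that AV could possibly return on $A$ (namely, one with the maximal score $n$) is an optimal candidate, and therefore $\varphi_{AV}(A)$ is optimal regardless of how ties are broken. This establishes the existence of an approval profile consistent with $I$ on which AV picks an optimal candidate, completing the argument.
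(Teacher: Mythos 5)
Your proposal is correct and follows essentially the same route as the paper: approve exactly $c_o$ and the candidates at least as close, observe that $c_o$ gets all $n$ approvals, and argue that any candidate tied at $n$ approvals is weakly closer to every voter and hence also optimal. Your version is slightly more careful in that it defines the profile directly by distances (which makes local consistency immediate) and verifies optimality of a tied candidate by summing distances rather than invoking Pareto domination, but these are cosmetic differences.
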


\Cref{prop:approval_general_worst_distortion,prop:approval_general_best_distortion} show that for each metric space $M$ there always exists two approval profile $A_1, A_2$ consistent with $M$ such that for $A_1$ AV selects the worst possible candidate, and for $A_2$ it selects the optimal one---since $A_1$ and $A_2$ are both consistent with $M$, they only differ in the sizes of approval balls. This formally shows that the performance of AV strongly depends on how many candidates the voters decide to approve. Below, we provide our main result of this section---assuming that all the acceptability balls have radiuses of the same length, we show the exact relation between this length of approval radiuses and the distance-based distortion of AV. In particular, we show that the best approval radius is such that the optimal candidate is approved by between $\nicefrac{1}{4}$ and $\nicefrac{1}{2}$ fraction of all the voters.  

\begin{definition}
An approval-based profile $A$ induced by an instance $I$ is \emph{$p$-efficient} for $p \in [0;1]$ if $R_A(c_o)=pn$.
\end{definition}

In words, a profile is \emph{p-efficient} if the number of voters who approve the optimal candidate is the $p$ fraction of $n$.

\begin{theorem} \label{approval_voting_global_upper_bound}
    For each globally consistent $p$-efficient instance $I$, we have the following results:
    \begin{equation*}
        D_I(\varphi_{AV}) \leq 
        \begin{cases}
        +\infty & \text{for } p  \in \{0, 1\}\\
        \frac{1-p}{p} & \text{for }p \in (0; \frac{1}{4}]\\
        3 & \text{for } p\in [\frac{1}{4};\frac{1}{2}]\\
        \frac{2-p}{1-p} & \text{for }p \in [\frac{1}{2};1)\text{.}
        \end{cases}
    \end{equation*}
   The above function is depicted in \Cref{fig:approval_voting_global_upper_bound}.
\end{theorem}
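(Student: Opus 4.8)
The plan is to reduce the bound to a triangle-inequality estimate and then solve a small optimisation governed by the distance $\delta := d(c_o, c_w)$ between the optimal candidate $c_o$ and the AV-winner $c_w := \varphi_{AV}(A)$. First I would record the structural facts the proof rests on. By global consistency there is a common acceptability radius $r$ with $i$ approving $c$ iff $d(i,c)\le r$; write $A_o := R_A(c_o)$ and $A_w := R_A(c_w)$, so $|A_o| = pn$ and, since $c_w$ maximises the number of approvals, $|A_w| \ge |A_o| = pn$. The cases $p\in\{0,1\}$ give $+\infty$ (for $p=0$ the optimum is approved by nobody and AV may return an arbitrarily bad candidate, cf.\ \Cref{prop:approval_general_worst_distortion}), so assume $p\in(0,1)$ and $c_w\neq c_o$, normalise $r=1$ and $n=1$ (working with fractions of voters), and denote by $f(p)$ the piecewise expression on the right-hand side of the statement.

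The first key step is a refined upper bound on the numerator. Splitting $N = A_w \cup A_w^{\complement}$, I would bound $d(i,c_w)\le r$ for $i\in A_w$ (these voters lie inside $c_w$'s ball) and $d(i,c_w)\le d(i,c_o)+\delta$ for $i\notin A_w$ (triangle inequality). Subtracting the denominator $\mathrm{den} := \sum_i d(i,c_o)$ and writing $S := \sum_{i\in A_w} d(i,c_o)$, this yields
\[
\sum_i d(i,c_w) - \mathrm{den} \;\le\; |A_w| - S + (1-|A_w|)\delta .
\]
It is crucial to treat the $A_w$-voters through $d(i,c_w)\le r$ rather than through the global bound $d(i,c_w)\le d(i,c_o)+\delta$: the latter over-counts voters clustered near $c_w$ and yields only $5$ (at $p=\tfrac12$) instead of $3$. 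Alongside this I would collect the denominator estimates: the $(1-p)n$ voters outside $A_o$ each satisfy $d(i,c_o)\ge r$, so $\mathrm{den}\ge 1-p$; every $i\in A_w$ has $d(i,c_o)\ge\delta-r$ and every $i\in A_w\setminus A_o$ has $d(i,c_o)\ge\max(r,\delta-r)$, which lower-bounds $S$; and, since a voter in $A_o\cap A_w$ would be within $r$ of both candidates, a nonempty overlap forces $\delta\le 2r$.

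Equipped with these, I would finish by a case analysis on $\delta$. When $\delta\le r$, the plain triangle inequality together with $\mathrm{den}\ge 1-p$ gives distortion $\le 1+\tfrac{\delta}{1-p}\le\tfrac{2-p}{1-p}$, which is $\le f(p)$ in every regime and is exactly the claimed value for $p\ge\tfrac12$. When $r<\delta\le 2r$, I substitute the refined bounds and optimise over the sizes of the overlap classes $A_o\cap A_w$ and $A_w\setminus A_o$; the extremum sits at $\delta=2r$ and yields $\le 3$ for $p\le\tfrac12$ and $\le\tfrac{2-p}{1-p}$ for $p\ge\tfrac12$. Finally, $\delta>2r$ forces $A_o\cap A_w=\emptyset$ (hence $p\le\tfrac12$) and $S\ge|A_w|(\delta-r)$; here the resulting ratio is monotone in $\delta$ with the direction flipping at $p=\tfrac14$, giving the supremum $\tfrac{1-p}{p}$ for $p\in(0,\tfrac14]$ and $\le 3$ for $p\in[\tfrac14,\tfrac12]$. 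Taking the maximum over the three cases reproduces $f(p)$.

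I expect the main obstacle to be the coupling between numerator and denominator: the same distances $d(i,c_o)$ appear (with opposite effect) in both, so one cannot independently maximise the numerator and minimise the denominator, and a careless bound collapses the constant from $3$ to $5$. The technical heart is therefore the constrained optimisation in the $r<\delta\le 2r$ and $\delta>2r$ cases over the admissible overlap counts and over $\delta$, together with verifying that its value is precisely $f(p)$---in particular pinning down the transition at $p=\tfrac14$, where the binding configuration switches from \emph{voters split between $c_o$ and $c_w$ with a far cluster} (driving $\tfrac{1-p}{p}$) to the symmetric three-group configuration (driving the constant $3$).
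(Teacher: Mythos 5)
Your proposal is correct and follows essentially the same route as the paper's proof: the same three-way case split on $d(c_o,c_w)$ versus $r$ and $2r$, the same refined numerator bound obtained by treating the voters in $R_A(c_w)$ via $d(i,c_w)\le r$ rather than the global triangle inequality, the same denominator lower bounds, and the same observations that a nonempty overlap forces $d(c_o,c_w)\le 2r$ while Case~3 forces disjointness and $p\le\frac{1}{2}$, with the transition at $p=\frac{1}{4}$ arising exactly as in the paper from $\frac{d(c_o,c_w)-r}{d(c_o,c_w)+r}\ge\frac{1}{3}$. The remaining work is only the algebraic optimisation you describe, which the paper carries out explicitly and which lands on the same values.
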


All these bounds are attained for instances in $\mathbb{E}^1$. While we omit the formal proof of this statement, in order to give the reader a better intuition, we illustrate hard instances for different values of $p$ in \Cref{fig:distInf}. 

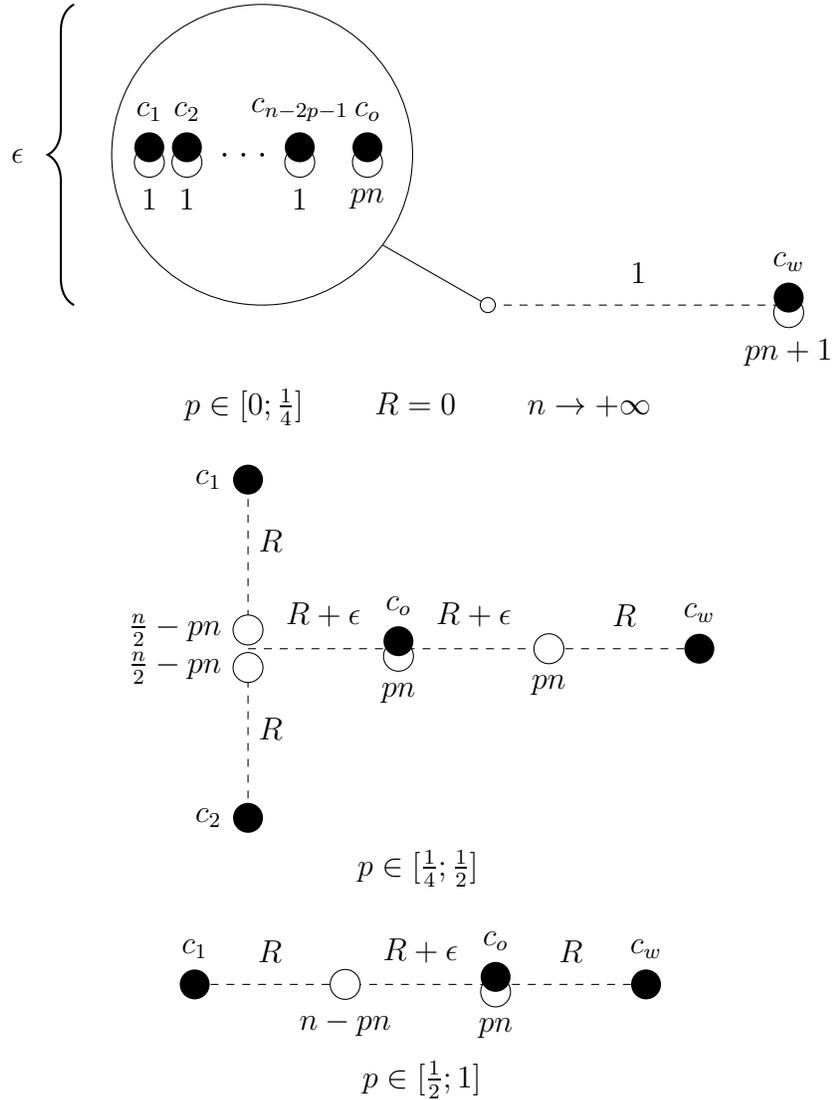
\begin{figure}[tp]
\centering
\begin{tikzpicture}
\node[label=$1$] at (2, 0) {};

\draw[style=solid] (-3.5,2) -- (0,0) {};
\draw[style=dashed] (0,0) -- (4,0) {};

\node[circle, fill=white, draw=black, minimum size=40mm] at (-3,2) {};
\draw[thick, decorate, decoration={brace, amplitude=10pt, mirror}] (-5.5,4) -- (-5.5, 0) {};

\node[label=$\epsilon$] at (-6.25, 1.6) {};

\node[circle, fill=black, label=$c_w$] at (4,0.1) {};

\node[circle, fill=black, label=$c_1$] at (-4.5,2.1) {};
\node[circle, draw=black, label=below:$1$] at (-4.5,1.9) {};

\node[circle, fill=black, label=$c_2$] at (-4,2.1) {};
\node[circle, draw=black, label=below:$1$] at (-4,1.9) {};

\node[circle, fill=black, minimum size=0.5mm, inner sep=0] at (-3.5,2) {};
\node[circle, fill=black, minimum size=0.5mm, inner sep=0] at (-3.25,2) {};
\node[circle, fill=black, minimum size=0.5mm, inner sep=0] at (-3,2) {};

\node[circle, fill=black, label=$c_{n-2p-1}$] at (-2.5,2.1) {};
\node[circle, draw=black, label=below:$1$] at (-2.5,1.9) {};

\node[circle, fill=black, label=$c_o$] at (-1.6,2.1) {};
\node[circle, draw=black, label=below:$pn$] at (-1.6,1.9) {};

\node[circle, fill=white, draw=black, minimum size=2mm, inner sep=0] at (0,0) {};
\node[circle, draw=black, label=below:$pn+1$] at (4,-0.1) {};

\end{tikzpicture}

$p \in [0; \frac{1}{4}]$ \qquad $R = 0$ \qquad $n \to +\infty$
\bigskip

\begin{tikzpicture}

\draw[style=dashed] (0,0) -- (6,0) {};
\draw[style=dashed] (0,0.2) -- (0,2.25) {};
\draw[style=dashed] (0,-0.2) -- (0,-2.25) {};

\node[label=$R$] at (0.3, -1.5) {};
\node[label=$R$] at (0.3, 1) {};

\node[label=$R+\epsilon$] at (1, 0) {};
\node[label=$R+\epsilon$] at (3, 0) {};
\node[label=$R$] at (5, 0) {};

\node[circle, fill=black, label=left:$c_1$] at (0,2.25) {};
\node[circle, fill=black, label=left:$c_2$] at (0,-2.25) {};
\node[circle, fill=black, label=$c_o$] at (2,0.1) {};
\node[circle, fill=black, label=$c_w$] at (6,0) {};

\node[circle, fill=white, draw=black, label=left:$\frac{n}{2}-pn$] at (0,0.25) {};
\node[circle, fill=white, draw=black, label=left:$\frac{n}{2}-pn$] at (0,-0.25) {};
\node[circle, draw=black, label=below:$pn$] at (2,-0.1) {};

\node[circle, fill=white, draw=black, label=below:$pn$] at (4,0) {};

\end{tikzpicture}

$p \in [\frac{1}{4}; \frac{1}{2}]$
\bigskip

\begin{tikzpicture}

\draw[style=dashed] (-2,0) -- (4,0) {};

\node[label=$R$] at (-1, 0) {};
\node[label=$R+\epsilon$] at (1, 0) {};
\node[label=$R$] at (3, 0) {};

\node[circle, fill=black, label=$c_1$] at (-2,0) {};
\node[circle, fill=black, label=$c_o$] at (2,0.1) {};
\node[circle, fill=black, label=$c_w$] at (4,0) {};

\node[circle, fill=white, draw=black, label=below:$n-pn$] at (0,0) {};
\node[circle, draw=black, label=below:$pn$] at (2,-0.1) {};

\end{tikzpicture}

$p \in [\frac{1}{2};1]$

\caption{Instances achieving the bounds given in \Cref{approval_voting_global_upper_bound}. White points correspond to groups of voters, black points---to the candidates. Here, $c_w$ is the winner of the election and $c_o$ is the optimal candidate, $c_w \succcurlyeq_{lex} c_o \succcurlyeq_{lex} c_1 \succcurlyeq_{lex} c_2 \succcurlyeq_{lex} \ldots$. $R$ is the length of the acceptability radius. Since $R$ is common for all the voters, the instances are globally consistent.}
\label{fig:distInf}
\end{figure}

Finally, for completeness, we give an analogue of \Cref{prop:approval_general_best_distortion}, but for globally-consistent instances. 

\begin{proposition}\label[proposition]{prop:global_consistent_optimistic_av}
For each instance $I \in \mathbb{I}$, there exists an approval profile $p$ globally consistent with $I$, such that
\begin{equation*}
   \frac{ \sum_{i \in N} d(i, \varphi(p))}{ \sum_{i \in N} d(i, c_o)} \leq \frac{11}{3} \text{.}
\end{equation*}
\end{proposition}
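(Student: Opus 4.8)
The plan is to exhibit a single common acceptability radius $R$ for which Approval Voting does well, and to read off the bound from \Cref{approval_voting_global_upper_bound} wherever possible, patching the one regime it does not directly control with a short triangle-inequality argument. As $R$ grows from $0$, the fraction $p(R)=|R_A(c_o)|/n$ of voters approving $c_o$ is nondecreasing and jumps only at the distinct values $d(i,c_o)$, so it is a step function taking finitely many values. The whole argument splits according to whether these achievable values meet the ``sweet spot'' $[\tfrac14,\tfrac12]$.

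If some radius $R$ gives $p(R)\in[\tfrac14,\tfrac12]$, the induced profile is globally consistent and $p(R)$-efficient, so \Cref{approval_voting_global_upper_bound} yields distortion at most $3\le\tfrac{11}{3}$ and we are done. A degenerate subcase must be separated out: if a strict majority of voters coincide with $c_o$ (that is, $p(0)>\tfrac12$), then at $R=0$ no candidate placed away from $c_o$ can gather a majority of approvals, so AV returns $c_o$ (or a candidate sharing its location, which is co-optimal), and the ratio is $1$.

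The interesting case is when the achievable values of $p$ skip the entire interval $[\tfrac14,\tfrac12]$ while $p(0)\le\tfrac12$. Then there is a threshold distance $d^*>0$ at which $p$ jumps from some $a<\tfrac14$ to some $b>\tfrac12$; concretely, $a n$ voters lie at distance $<d^*$ from $c_o$, a block at distance exactly $d^*$, and so $(1-a)n$ voters lie at distance at least $d^*$. I would take $R=d^*$. Then $c_o$ is approved by $bn$ voters, and the AV winner $c_w$ by $W\ge |R_A(c_o)|=bn$ voters. Since $b>\tfrac12$, these two approval sets have sizes summing to more than $n$ and hence intersect; a voter $v$ in the intersection satisfies $d(v,c_o)\le d^*$ and $d(v,c_w)\le d^*$, so $d(c_o,c_w)\le 2d^*$ by the triangle inequality. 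For the denominator, the $(1-a)n>\tfrac34 n$ voters at distance at least $d^*$ give $\sum_{i\in N}d(i,c_o)\ge(1-a)\,n\,d^*$. For the numerator, applying the triangle inequality voter by voter gives $\sum_{i\in N}d(i,c_w)\le\sum_{i\in N}d(i,c_o)+n\,d(c_o,c_w)\le\sum_{i\in N}d(i,c_o)+2n\,d^*$. Dividing, the ratio is at most $1+\frac{2n\,d^*}{(1-a)\,n\,d^*}=1+\frac{2}{1-a}<1+\frac{2}{3/4}=\frac{11}{3}$, as required.

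I expect the main obstacle to be the combinatorial bookkeeping rather than the inequalities: one must argue carefully that $p(R)$ really is a step function and that, whenever $[\tfrac14,\tfrac12]$ is skipped over, the jump is produced by a block of voters sitting at a common distance $d^*$ from $c_o$. This single structural fact is what simultaneously delivers both the majority overlap (hence $d(c_o,c_w)\le 2d^*$) and the lower bound $\sum_{i\in N}d(i,c_o)\ge\tfrac34\,n\,d^*$ that drives the constant $\tfrac{11}{3}$. The remaining degenerate situations---no achievable value strictly below $\tfrac14$, or a jump sitting at $d^*=0$---have to be checked by hand, but in each of them either \Cref{approval_voting_global_upper_bound} applies directly or AV already returns $c_o$, so they do not affect the bound.
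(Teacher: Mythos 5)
Your proof is correct and follows essentially the same route as the paper's: both pick the smallest common radius at which at least $\frac{n}{4}$ of the voters approve $c_o$, dispatch the case $p\in[\frac{1}{4},\frac{1}{2}]$ via \Cref{approval_voting_global_upper_bound}, and otherwise combine the majority overlap (giving $d(c_o,c_w)\le 2R$) with the fact that more than $\frac{3}{4}n$ voters lie at distance at least $R$ from $c_o$, yielding $1+\frac{2}{3/4}=\frac{11}{3}$. Your explicit separate treatment of the degenerate case $p(0)>\frac{1}{2}$ is slightly more careful than the paper's, but the argument is the same.
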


\section{AB-Distortion of Ranking Rules}
Recall that the ab-distortion of a voting rule is a value from $[0;1]$, proportional to the difference between the number of voters accepting the optimal candidate and the number of voters accepting the winner. By definition, this value equals $0$ for AV (provided the approval profile is truthful). In this section we analyze the ab-distortion of ranking-based rules. 

We start by proving the lower bound on the ab-distortion of any ranking-based voting rule.

\begin{theorem}\label[theorem]{thm:smith_lower_bound}
For each $\ell \in \mathbb{N}$ and each ranking-based rule $\varphi$, there exists a globally consistent instance $I$ such that:
\begin{enumerate}
    \item the size of the Smith set in the ranking-based profile induced by $I$ equals $\ell$,
    \item $D_I(\varphi) = 
        \begin{cases}
        \frac{\ell-1}{\ell} & \text{for }\ell \geq 2\\
        \frac{1}{2} & \text{for } \ell = 1.
        \end{cases}$
\end{enumerate}
\end{theorem}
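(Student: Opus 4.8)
The plan is to separate the two kinds of information an instance carries: the ranking profile sees only the \emph{order} of distances, whereas acceptance is governed by the \emph{actual} distances together with the common radius. I would fix a cyclic ``Condorcet-paradox'' profile $\sigma_\ell$ on candidates $a_1,\dots,a_\ell$ and $\ell$ equal groups $G_1,\dots,G_\ell$ (each of size $n/\ell$), where $G_k$ ranks $a_k \succ a_{k+1} \succ \dots \succ a_{k+\ell-1}$ (indices mod $\ell$). This profile is invariant under the cyclic relabelling $\rho\colon a_i\mapsto a_{i+1},\,G_k\mapsto G_{k+1}$. The crucial point against an \emph{arbitrary} rule $\varphi$ is that $\sigma_\ell$ fixes the winner $c_w=\varphi(\sigma_\ell)=a_j$ once and for all, after which I am free to pick \emph{any} globally consistent metric inducing $\sigma_\ell$ and to arrange acceptances as I wish, provided the order is preserved; a rotation of that metric by $\rho$ then lets me punish whichever $a_j$ the rule happened to return. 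So the whole argument reduces to building a single base instance and rotating it.

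For the base instance $I^\star$ I would aim for the acceptance pattern in which $G_k$ accepts exactly the prefix $\{a_k,\dots,a_\ell\}$ of its ranking; this makes $a_\ell$ accepted by \emph{all} voters and $a_1$ accepted \emph{only} by $G_1$, i.e.\ by $n/\ell$ voters, and it is automatically locally and globally consistent for one radius $r$. To turn this into a genuine metric I would set the voter–candidate distances to $d(G_k,a_i)=T\!\big(\tfrac{g+1/2}{\ell-k+1/2}\big)$ with $g=(i-k)\bmod\ell$ and $T$ a fixed increasing bijection onto $[1,2]$, put $r=T(1)$, and take the shortest-path metric of the complete bipartite graph on $\{G_k\}\cup\{a_i\}$ carrying these edge weights. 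Because all weights lie in $[1,2]$, every direct voter–candidate edge is a geodesic, so the shortest-path (pseudo-)metric reproduces the prescribed distances; by design $d(G_k,a_i)\le r$ iff $g\le \ell-k$, which yields exactly the intended acceptance prefixes, while strict monotonicity in $g$ yields exactly the cyclic order $\sigma_\ell$.

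With $I^\star$ in hand I would beat $\varphi$ by applying $\rho^{\,j-1}$. Since $\rho$ is an automorphism of $\sigma_\ell$, the rotated instance $I$ still induces $\sigma_\ell$, so $\varphi$ still outputs $a_j$; but now $a_j$ inherits the former role of $a_1$ (accepted by $n/\ell$ voters) and $a_{j-1}$ inherits the role of $a_\ell$ (accepted by all). Taking $c_o=a_{j-1}$, which is accepted by all $n$ voters and is therefore optimal, gives
\[
D_I(\varphi)=\frac{R_\lambda(c_o)-R_\lambda(a_j)}{n}=\frac{n-n/\ell}{n}=\frac{\ell-1}{\ell}.
\]
To certify the Smith-set size I would note that each $a_i$ dominates $a_{i+1}$ (a majority $(\ell-1)n/\ell>n/2$ for $\ell\ge 3$), so $a_1\to a_2\to\cdots\to a_\ell\to a_1$ is a Hamiltonian directed cycle in the majority tournament; hence the tournament is strongly connected, no proper subset is dominant, and the Smith set is all $\ell$ candidates (for $\ell=2$ the two candidates tie, which again forces both into the Smith set). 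Choosing $n$ a multiple of $\ell$, and group sizes avoiding exact pairwise ties when $\ell\ge3$, makes $\sigma_\ell$ well defined as a strict profile, so $\mathcal{P}_I=\{\sigma_\ell\}$ and the maximum in the definition of $D_I$ is over this single profile.

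The main obstacle is precisely the simultaneous realization in the second step: forcing \emph{one} globally consistent radius to produce such lopsided acceptances ($a_\ell$ by everyone, $a_1$ by a $1/\ell$ fraction) without disturbing the cyclic order or collapsing the Smith set. The bipartite shortest-path device is what makes this clean, but one must check that it keeps every voter's candidate-distances strictly ordered and that the construction really is a metric, not merely a collection of distances. Finally, the case $\ell=1$ is degenerate and I would treat it by a short standalone argument: with a strict Condorcet winner that $\varphi$ elects one can show the winner is accepted by more than half the voters whenever the optimum is accepted by all, so the value $\tfrac12$ here is the tie-boundary limit obtained from a minimal two-candidate instance rather than from the cyclic construction, consistent with the $\ell=2$ value.
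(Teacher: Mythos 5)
Your argument for $\ell \ge 2$ is, at its core, the same as the paper's: a cyclic profile on $\ell$ candidates with $\ell$ equal groups accepting prefixes of lengths $1,2,\dots,\ell$ of their rankings, so that one candidate is accepted by everyone and the ``next'' one only by a $\nicefrac{1}{\ell}$ fraction, followed by rotating the acceptance pattern to punish whichever candidate the rule returns; the Smith-set computation via the Hamiltonian majority cycle also matches. Where you genuinely differ is the metric realization. The paper embeds the candidates as vertices of a regular $(\ell-1)$-simplex in $\mathbb{R}^{\ell-1}$ and places each group at the circumcenter of the face spanned by its accepted candidates, using circumradius/height inequalities to verify the acceptance pattern and the rankings; you instead prescribe voter--candidate distances in $[1,2]$ and take the shortest-path pseudo-metric of the complete bipartite graph, where every direct voter--candidate edge is automatically a geodesic. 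Your device is simpler, avoids the simplex geometry entirely, and has the further advantage that each voter's distances are strictly ordered, so the induced ranking profile is unique (the paper's construction has ties and must argue that the intended ranking is merely \emph{among} the profiles consistent with its metric). What it gives up is Euclidean realizability, which the theorem statement does not require but which the paper's construction delivers.

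The one genuine gap is the case $\ell=1$. Your closing sketch (``with a strict Condorcet winner that $\varphi$ elects one can show the winner is accepted by more than half the voters\dots'') is an upper-bound-flavoured remark, not a lower-bound construction, and the cyclic machinery degenerates when there is a single candidate in the Smith set. What is actually needed is the same indistinguishability trick you already use via $\rho$, applied now to a \emph{pair} of two-candidate instances: two globally consistent instances inducing the identical ranking profile, each with a Condorcet winner (so the Smith set has size $1$), in which the roles of ``accepted by all $n$ voters'' and ``accepted by about $\nicefrac{n}{2}$ voters'' are swapped between the two candidates. Whichever candidate $\varphi$ elects on that common profile, one of the two instances yields ab-distortion $\nicefrac{1}{2}$ (up to a $\nicefrac{1}{n}$ term). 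This is short, but it is not the argument you described, so the $\ell=1$ branch of the theorem is currently unproved in your write-up.
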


In the subsequent part of this section we will assess the distortion of specific voting rules, specifically looking for one that meets the lower-bound from \Cref{thm:smith_lower_bound}.

\subsection{Condorcet Rules}
We start  by looking at Condorcet consistent rules. Note that the lower bound found in \Cref{thm:smith_lower_bound} is promising, as it depends on the size of the Smith set. In particular, if $\ell=1$, this bound equals $\frac{1}{2}$. Our first goal is to determine, whether Condorcet rules meet this bound.

\begin{theorem}\label{Condorcet_upper_bound}
Let $I$ be an instance where a Condorcet winner exists. Then, for each Condorcet consistent rule $\varphi$ we have $D_I(\varphi) \leq \frac{1}{2}$. This bound is achievable for a globally consistent $I\in \mathbb{E}^2$.
\end{theorem}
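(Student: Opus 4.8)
The plan is to reduce the claim to a single counting inequality using Condorcet consistency, prove that inequality through the local consistency of $\lambda$, and then exhibit a matching planar instance. Throughout I write $R_\lambda(c)$ for the \emph{set} of voters who accept $c$, so that the numerator of the ab-distortion is $|R_\lambda(c_o)|-|R_\lambda(\varphi(p))|$. First I would fix the ranking profile $p$ induced by $I$ together with the (most-accepted) optimal candidate $c_o$; since $\lambda$ is part of the instance, both $R_\lambda(\cdot)$ and $c_o$ are determined by $I$, and the maximisation over $\mathcal{P}_I$ is harmless for a ranking rule (distance ties only enlarge $\mathcal{P}_I$, a technicality I would note but not dwell on). Because a Condorcet winner $c_w$ exists and $\varphi$ is Condorcet consistent, $\varphi(p)=c_w$, so it suffices to prove
\[
  |R_\lambda(c_o)|-|R_\lambda(c_w)|\le \tfrac{n}{2},
\]
the case $c_w=c_o$ being trivial.

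The key step is the inclusion $R_\lambda(c_o)\setminus R_\lambda(c_w)\subseteq P(c_o,c_w)$. Indeed, take any voter $i\in P(c_w,c_o)$; by the definition of the induced ranking profile, $c_w\succcurlyeq_i c_o$ forces $d(i,c_w)\le d(i,c_o)$, so if $c_o\in\lambda(i)$ then \emph{local consistency} of $\lambda$ yields $c_w\in\lambda(i)$. Hence no voter preferring $c_w$ to $c_o$ can accept $c_o$ without also accepting $c_w$, which is exactly the claimed inclusion. Consequently
\[
  |R_\lambda(c_o)|-|R_\lambda(c_w)|\le |R_\lambda(c_o)\setminus R_\lambda(c_w)|\le |P(c_o,c_w)|,
\]
and since $c_w$ dominates $c_o$ we have $|P(c_w,c_o)|>\tfrac{n}{2}$, i.e.\ $|P(c_o,c_w)|<\tfrac{n}{2}$. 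Dividing by $n$ gives $D_I(\varphi)\le \tfrac12$. Note this half of the argument uses only local consistency, so it applies to every instance in $\mathbb{I}$, not merely to $\mathbb{E}^2$.

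For tightness I would build a globally consistent two-candidate instance in $\mathbb{E}^2$ with $c_o=(0,0)$, $c_w=(\tfrac32 R,0)$, and common acceptability radius $R$. I place $\lfloor n/2\rfloor+1$ voters in a tiny cluster just to the $c_w$-side of the midpoint (within distance $R$ of both candidates, but strictly closer to $c_w$) and the remaining $\lceil n/2\rceil-1$ voters slightly to the left of $c_o$ (within $R$ of $c_o$ but farther than $R$ from $c_w$, which is possible since $d(c_o,c_w)=\tfrac32R>R$). Then $c_w$ wins the pairwise contest and is the Condorcet winner, $c_o$ is accepted by all $n$ voters while $c_w$ is accepted only by the cluster, and the acceptance gap equals the number of left-hand voters, giving $D_I(\varphi)=\tfrac12-\Theta(1/n)$.

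The main obstacle is exactly this strict-versus-weak gap: a \emph{strict} Condorcet winner forces $|P(c_o,c_w)|\le \lceil n/2\rceil-1<\tfrac{n}{2}$, so no single finite instance attains $\tfrac12$ exactly, and the value $\tfrac12$ is the supremum over the family above (equivalently, it is attained once the majority against $c_o$ is an exact tie broken toward $c_w$, i.e.\ a weak Condorcet winner). The genuinely delicate point in the upper bound is the local-consistency step converting the preference $i\in P(c_w,c_o)$ into the acceptance implication $c_o\in\lambda(i)\Rightarrow c_w\in\lambda(i)$; everything else is bookkeeping. I would therefore spend most care on making that implication airtight and on confirming that the planar coordinates simultaneously realise the required pairwise majority and the required acceptance pattern under a single radius $R$.
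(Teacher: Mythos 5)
Your upper-bound argument is essentially the paper's: both reduce $|R_\lambda(c_o)|-|R_\lambda(c_w)|$ to $|R_\lambda(c_o)\setminus R_\lambda(c_w)|\le |P(c_o,c_w)|\le \frac{n}{2}$, using the fact that the Condorcet winner (weakly) dominates $c_o$. You are in fact more explicit than the paper, which asserts the inclusion $R_\lambda(c_o)\setminus R_\lambda(c_w)\subseteq P(c_o,c_w)$ without spelling out the local-consistency step ($c_w\succcurlyeq_i c_o$ forces $d(i,c_w)\le d(i,c_o)$, hence $c_o\in\lambda(i)$ implies $c_w\in\lambda(i)$) that you make airtight. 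Where you genuinely differ is the tightness witness: the paper uses a four-candidate planar instance ($\frac{n}{2}-1$ voters approving only $c_x$, $\frac{n}{4}+1$ approving only $c_y$, $\frac{n}{4}$ approving only $c_z$, with the Condorcet winner $c_c$ approved by nobody and $c_x$ optimal), whereas you use a two-candidate collinear instance in which the optimal candidate is approved by all $n$ voters and the Condorcet winner by $\lfloor n/2\rfloor+1$ of them. Both constructions are globally consistent and yield distortion $\frac{1}{2}-\Theta(1/n)$; yours is simpler and shows the bound is already tight in a one-dimensional configuration embedded in $\mathbb{E}^2$. Your observation that $\frac{1}{2}$ is attained only as a supremum over finite instances is consistent with the paper, which likewise reports the value $\frac{1}{2}-\frac{1}{n}$ as ``arbitrarily close to $\frac{1}{2}$''. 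No gaps.
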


From the above theorem, we get that for $\ell = 1$ each Condorcet election method matches the lower bound from \Cref{thm:smith_lower_bound}. Now we will prove that there exists election rules, namely Ranked Pairs and the Schulze's rule, which match this bound for each $\ell$.

\begin{theorem}\label[theorem]{Ranked_pairs_upper_bound}
For each election instance $I$, the ab-distortion of Ranked Pairs and the Schulze's method is equal to:
    \begin{itemize}
        \item $\frac{\ell-1}{\ell}$ for $\ell \geq 2$,
        \item $\frac{1}{2}$ for $\ell = 1$,
    \end{itemize}
where $\ell$ is the size of the Smith set of $I$.
\end{theorem}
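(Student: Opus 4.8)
The plan is to establish the upper bound $D_I(\varphi)\le\frac{\ell-1}{\ell}$ (resp.\ $\frac12$ when $\ell=1$) for every instance $I$ and every $\varphi\in\{\text{Ranked Pairs},\text{Schulze}\}$, and to obtain matching lower bounds directly from \Cref{thm:smith_lower_bound} applied to these two rules. Throughout, write $c_w=\varphi(\succcurlyeq_I)$ for the (unique, after lexicographic tie-breaking) winner and $c_o$ for the optimal candidate; since a ranking-based instance induces a single profile, $D_I(\varphi)=\frac{|R_\lambda(c_o)|-|R_\lambda(c_w)|}{n}$.

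First I would record the basic consequence of local consistency: for any two candidates $c_a,c_b$, every voter who accepts $c_a$ and ranks $c_b$ weakly above $c_a$ also accepts $c_b$, so $R_\lambda(c_a)\cap P(c_b,c_a)\subseteq R_\lambda(c_b)$ and hence $|R_\lambda(c_a)|-|R_\lambda(c_b)|\le|P(c_a,c_b)|$. Applying this with $c_a=c_o,\ c_b=c_w$ reduces the whole theorem to a single statement about the ranking profile: it suffices to show that the winner satisfies $|P(c_w,c_o)|\ge\frac{n}{\ell}$, for then $D_I(\varphi)\le\frac{|P(c_o,c_w)|}{n}=1-\frac{|P(c_w,c_o)|}{n}\le\frac{\ell-1}{\ell}$. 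The case $\ell=1$ is separate: then the Smith set is a Condorcet winner, both rules elect it, and \Cref{Condorcet_upper_bound} gives $D_I(\varphi)\le\frac12$.

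The core of the argument is the claim that, for $\ell\ge2$, the winner obeys $|P(c_w,c)|\ge\frac{n}{\ell}$ for every $c\ne c_w$. I would prove this by contradiction, assuming $|P(c_w,c_o)|<\frac{n}{\ell}$, i.e.\ $|P(c_o,c_w)|>\frac{\ell-1}{\ell}n$. The first ingredient is a purely combinatorial margin bound: for any directed cycle on distinct candidates $f_1\to f_2\to\cdots\to f_k\to f_1$, each voter's linear order can rank $f_j$ above $f_{j+1}$ for at most $k-1$ of the $k$ cyclic steps (all $k$ would force $f_1\succ\cdots\succ f_k\succ f_1$), whence $\sum_{j=1}^{k}|P(f_j,f_{j+1})|\le(k-1)n$. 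The second ingredient is that the assumption forces the existence of such a cycle with every edge of strength exceeding $\frac{\ell-1}{\ell}n$. For Schulze this is immediate: the winner condition $p[c_w,c_o]\ge p[c_o,c_w]\ge|P(c_o,c_w)|>\frac{\ell-1}{\ell}n$ yields a beatpath from $c_w$ to $c_o$ all of whose edges exceed $\frac{\ell-1}{\ell}n$, which together with the edge $c_o\to c_w$ closes a cycle. For Ranked Pairs I would use that the rule outputs a full linear order with $c_w$ on top; since the pair $(c_o,c_w)$ (processed before $(c_w,c_o)$ because $|P(c_o,c_w)|\ge|P(c_w,c_o)|$) is never installed as an edge, there must already be a path $c_w\to\cdots\to c_o$ at that moment, and all its edges, being inserted earlier in decreasing-strength order, have strength $\ge|P(c_o,c_w)|>\frac{\ell-1}{\ell}n$; adding $c_o\to c_w$ again closes a cycle. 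Since every edge has strength $>\frac{n}{2}$, the cycle consists of strict dominations, so it cannot contain any candidate outside the Smith set (a non-member can never dominate a member); as $c_o$ and $c_w$ lie on it, the whole cycle sits inside the Smith set and has length $k\le\ell$. Plugging into the margin bound gives $(k-1)n\ge\sum_j|P(f_j,f_{j+1})|>k\cdot\frac{\ell-1}{\ell}n\ge(k-1)n$, where the last inequality is $k\cdot\frac{\ell-1}{\ell}\ge k-1\iff\ell\ge k$, a contradiction.

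Combining the reduction with the claim yields the upper bounds for all $\ell$, and for tightness I would invoke \Cref{thm:smith_lower_bound}, which for every $\ell$ furnishes a globally consistent instance with Smith set of size $\ell$ on which any fixed rule---in particular Ranked Pairs and Schulze---attains $\frac{\ell-1}{\ell}$ (resp.\ $\frac12$). I expect the main obstacle to be the Ranked Pairs half of the core claim: one must argue carefully that the skipped edge $c_o\to c_w$ certifies an already-present high-strength path, which requires pinning down the processing order of the two relevant pairs and the monotonicity of inserted edge strengths. The Schulze half and the cycle/margin counting are comparatively routine once the observation that such a high-strength cycle stays inside the Smith set is in place.
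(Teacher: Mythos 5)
Your overall strategy is essentially the paper's, just packaged as a direct contradiction rather than through an auxiliary lemma: the reduction $|R_\lambda(c_o)|-|R_\lambda(c_w)|\le|P(c_o,c_w)|$ via local consistency, the extraction of a high-strength path from $c_w$ to $c_o$ (from the skipped edge for Ranked Pairs, from $p[c_w,c_o]\ge p[c_o,c_w]$ for Schulze), the cycle-counting inequality $\sum_j|P(f_j,f_{j+1})|\le(k-1)n$, and tightness from \Cref{thm:smith_lower_bound} are exactly the ingredients the paper uses (it routes them through the notion of an ``immune'' rule). However, there is one genuine gap: the step ``as $c_o$ and $c_w$ lie on it, the whole cycle sits inside the Smith set.'' Your parenthetical justification --- a non-member can never dominate a member --- only shows that the cycle of strict dominations cannot cross the boundary of the Smith set $S$, i.e., it lies either entirely inside or entirely outside $S$. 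To conclude that it lies inside, you need at least one vertex of the cycle already known to belong to $S$, and at that point in your argument neither $c_o$ nor $c_w$ has been shown to be in $S$. If the cycle were entirely outside $S$, its length could be as large as $m-\ell$, and the final chain $(k-1)n>k\cdot\frac{\ell-1}{\ell}n\ge(k-1)n$ collapses because $k\le\ell$ is exactly what fails.

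The missing anchor is Smith consistency of the two rules, i.e.\ $c_w\in S$: from this, $c_o\in S$ follows (under your hypothesis $c_o$ strictly dominates $c_w$, and a non-member cannot dominate a member), and backward induction along the beatpath then places the entire cycle in $S$. This is precisely what the paper supplies by proving that both rules elect from an ``immunity set'' contained in the Smith set. The good news is that your own machinery proves it: run the same skipped-edge (resp.\ beatpath) argument against an arbitrary $c\in S$ instead of $c_o$. If $c_w\notin S$, then $c$ dominates $c_w$, so the pair $(c,c_w)$ is processed before $(c_w,c)$ and is not installed (resp.\ $p[c_w,c]\ge p[c,c_w]\ge|P(c,c_w)|>\nicefrac{n}{2}$), yielding a path of strict dominations from $c_w$ to $c$; backward induction from $c\in S$ forces $c_w\in S$, a contradiction. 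With that one step inserted before the cycle-length bound, your proof is complete and essentially equivalent to the paper's.
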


As we can see, there is no rule with a better ab-distortion than these two rules. Yet, it is not a feature of all the Condorcet methods. As we will see, even for the well-known Copeland's rule, the possible pessimistic distortion is much worse.

\begin{theorem}\label[theorem]{thm:distortion_copeland}
    For each $\epsilon > 0$, there exists a globally consistent instance $I \in \mathbb{E}^2$ for which the ab-distortion of the Copeland's rule exceeds $1-\epsilon$.  
\end{theorem}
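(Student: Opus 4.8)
The plan is to build, for each $\epsilon>0$, a single globally consistent instance in $\mathbb{E}^2$ on five candidates $c_o,c_w,m_1,m_2,m_3$ whose induced majority tournament is fixed as follows: $c_o$ dominates $c_w$; $c_w$ dominates each $m_i$; each $m_i$ dominates $c_o$; and the mediums form the $3$-cycle $m_1\to m_2\to m_3\to m_1$. A direct count of Copeland scores in this tournament gives $3$ for $c_w$, $1$ for $c_o$, and $2$ for every $m_i$, so $c_w$ is the \emph{unique} Copeland winner and no tie‑breaking is needed. The tournament is strongly connected, so there is no Condorcet winner and the Smith set is all of $C$ (size $5$); this places us squarely in the regime $\ell>1$ where \Cref{Condorcet_upper_bound} does not apply. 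I will then fix one acceptability radius $r$ and place the voters so that $R_\lambda(c_o)=(1-\epsilon)n$ and $R_\lambda(c_w)\le\epsilon n$, which makes $c_o$ the acceptability‑optimal candidate and forces the ab‑distortion to be at least $\tfrac{(1-\epsilon)n-\epsilon n}{n}=1-2\epsilon$; letting $\epsilon\to0$ (with $n$ large) proves the claim, and already exhibits a fixed Smith size $\ell=5$ on which Copeland beats the optimal bound $\tfrac{\ell-1}{\ell}=\tfrac45$ achieved by Ranked Pairs.

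The voters split into a \emph{bulk} of $(1-\epsilon)n$ voters lying inside the radius‑$r$ ball of $c_o$, so that all of them accept $c_o$, together with a \emph{slack} set $S$ of the remaining $\epsilon n$ voters, which do not accept $c_o$ (they lie at distance $>r$ from it). I would arrange the bulk into clusters so that for each $i$ the perpendicular bisector of $c_o$ and $m_i$ splits it into two halves: an $m_i$‑side half that accepts $m_i$ and ranks $m_i\succ c_o$, and a $c_o$‑side half for which $m_i$ is the farthest of $\{c_o,m_i,c_w\}$, so that it ranks $c_w\succ m_i$. The slack voters are placed near $c_w$ and ranked $c_w\succ m_i\succ c_o$ for every $i$. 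Writing $A_i=\{v:d(v,m_i)<d(v,c_o)\}$ and $B_i=\{v:d(v,c_w)<d(v,m_i)\}$, this gives $A_i=(\text{$m_i$‑side bulk})\cup S$ and $B_i=(\text{$c_o$‑side bulk})\cup S$, each of size just above $n/2$; the bulk simultaneously keeps every voter ranking $c_o\succ c_w$, so $c_o$ dominates $c_w$, and a suitable cyclic subdivision of the $m_i$‑accepting voters produces the $3$‑cycle among the mediums.

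The structural heart of the construction, and the step I expect to be the hardest, is the geometric compatibility of the two opposing majorities attached to each medium. Since $|A_i|>n/2$ and $|B_i|>n/2$ we have $A_i\cap B_i\neq\emptyset$, and any $v\in A_i\cap B_i$ satisfies $d(v,c_w)<d(v,c_o)$; hence if such a $v$ accepted $c_o$ it would also accept $c_w$. To keep $R_\lambda(c_w)$ at most $\epsilon n$ I must therefore force $A_i\cap B_i\subseteq S$, i.e. $|A_i|+|B_i|\le(1+\epsilon)n$, so both majorities are razor‑thin and their defining half‑planes meet only in a sliver occupied solely by slack voters. Moreover, the slack voters must rank $c_w\succ m_i\succ c_o$, i.e. have $c_o$ as their farthest candidate; but whenever $c_o$ lies on the segment between $c_w$ and $m_i$ it can never be strictly farthest (the distance to an interior point is at most the larger of the distances to the endpoints). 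Thus $c_o$ must lie \emph{off} the line through $c_w$ and $m_i$, which is exactly what fails in $\mathbb{E}^1$ (there the median forces a Condorcet winner, $\ell=1$, and \Cref{Condorcet_upper_bound} caps the distortion at $\tfrac12$); this is why the construction must live in $\mathbb{E}^2$.

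The main obstacle is then to carry out this balancing act simultaneously for all three mediums with a \emph{single} far‑away $c_w$ and a \emph{single} slack set: one has to position $c_w$, the $m_i$, and the voter clusters so that (i) each bisector $\mathrm{bis}(c_o,m_i)$ halves the bulk (which, by the betweenness remark, forbids the naive symmetric or collinear placements and requires the $m_i$ and $c_w$ to sit in a common ``eastern'' region off the axes through $c_o$), (ii) the same slack set lies in every wedge $A_i\cap B_i$ and ranks $c_w\succ m_i\succ c_o$ for all $i$, (iii) $c_w$ stays outside the radius‑$r$ balls of all bulk voters yet beats every $m_i$ through its $c_o$‑side half, and (iv) the residual preferences realize the $m_1\succ m_2\succ m_3\succ m_1$ cycle without disturbing (i)–(iii). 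Once explicit coordinates meeting (i)–(iv) are fixed, verifying the tournament, the Copeland scores $3,1,2,2,2$, the single common radius (global consistency), and the counts $R_\lambda(c_o)=(1-\epsilon)n$, $R_\lambda(c_w)\le\epsilon n$ is routine, and yields ab‑distortion $\ge1-2\epsilon$.
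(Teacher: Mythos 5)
Your proposal is a plan, not a proof. The theorem is an existence statement, so its entire content is the explicit instance; you correctly identify the geometric realization as ``the structural heart of the construction'' and ``the main obstacle,'' and then you defer it (``Once explicit coordinates meeting (i)--(iv) are fixed\dots''). Nothing in the write-up establishes that a single common radius $r$, a single slack set $S$ of size $\epsilon n$, and a single far-away $c_w$ can simultaneously satisfy, for all three mediums, the razor-thin majorities $|A_i|,|B_i|>n/2$ with $A_i\cap B_i\subseteq S$, the acceptance pattern $R_\lambda(c_o)=(1-\epsilon)n$, $R_\lambda(c_w)\le\epsilon n$, and the $m_1\to m_2\to m_3\to m_1$ cycle. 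These constraints interact: each $m_i$-side bulk must lie within distance $r$ of both $c_o$ and $m_i$ while each $c_o$-side bulk must have $m_i$ strictly farther than the non-accepted $c_w$, and three such partitions must coexist inside one disk of radius $r$ around $c_o$. Until coordinates are exhibited and these inequalities checked, the claim is unproven.

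The gap is also unnecessary, because you are solving a harder problem than the theorem requires. The paper fixes lexicographic tie-breaking to make all rules resolute, and its construction exploits exactly that: three candidates at the vertices of an equilateral triangle, with $\frac{n}{2}-1$ voters at $c_1$ (ranking $c_1\succ c_2\succ c_3$, approving only $c_1$), $\frac{n}{2}-1$ voters at the midpoint of $c_1c_3$ (ranking $c_3\succ c_1\succ c_2$, approving $c_1$ and $c_3$), and $2$ voters at the circumcenter (ranking $c_2\succ c_3\succ c_1$, approving everything), with the common radius equal to the circumradius. This yields the Condorcet cycle $c_1\to c_2\to c_3\to c_1$, so all Copeland scores are tied and the tie-breaking elects $c_2$, approved by only $2$ voters, while $c_1$ is approved by all $n$; the ab-distortion is $1-\nicefrac{2}{n}>1-\epsilon$. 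Your insistence on a \emph{unique} Copeland winner (scores $3,1,2,2,2$ on five candidates) would give a tie-breaking-free strengthening and is a legitimate ambition, but within the paper's framework it buys nothing for this theorem, and it is precisely what forces you into the delicate simultaneous-majority geometry you did not resolve. Either complete the five-candidate construction with explicit coordinates and verified distances, or drop to the tied-scores route.
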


\subsection{Scoring Rules}
Let us now move to positional scoring rules. Here, we obtain significantly worse results than for Ranked Pairs and the Schulze's rule. A general tight upper bound for the ab-distortion of any scoring rule remains an open problem. Below we provide bounds that are tight for certain specific scoring rules.

\begin{theorem}\label{scoring_rules_upper_bound}
    For a scoring rule $\varphi$ defined by vector $\vec{s}=(s_1,\ldots,s_m)$ the ab-distortion of $\varphi$ satisfies:
    \begin{enumerate}
        \item $D_I(\varphi) = 1$, if $s_1 = \ldots = s_m$,
        \item $D_I(\varphi) \leq \frac{\max_{i,j}|s_i-s_j|}{\max_{i,j}|s_i-s_j| +  \min_{i,j}|s_i-s_j|}$, otherwise.
    \end{enumerate}
\end{theorem}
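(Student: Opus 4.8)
The plan is to first record a structural consequence of local consistency and then treat the two parts separately. The key observation is that local consistency forces each $\lambda(i)$ to be a top-prefix of the ranking $\succcurlyeq_i$: if $c_a \in \lambda(i)$ and $c_b \notin \lambda(i)$, then by contraposition of local consistency $d(i,c_b) > d(i,c_a)$, so $c_a \succ_i c_b$. Consequently, if a voter accepts $c_o$ but not $c_w = \varphi(p)$, that voter ranks $c_o$ strictly above $c_w$. This prefix property holds for every ranking profile induced by the instance, irrespective of how distance ties are broken, so it will apply uniformly to the maximum over $p \in \mathcal{P}_I$.

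For the first part, when $s_1 = \cdots = s_m = \alpha$ every candidate collects exactly $n\alpha$ points, so all candidates tie and the winner is decided solely by $\succcurlyeq_{lex}$. I would build an instance in $\mathbb{E}^1$ with all $n$ voters and $c_o$ at the origin and every other candidate strictly farther away, choosing the labels so that the lexicographic tie-break returns some $c_w \neq c_o$; taking each acceptability radius equal to $0$ makes $\lambda(i)=\{c_o\}$ for all $i$, which is nonempty and globally consistent. Then $|R_\lambda(c_o)| = n$, $|R_\lambda(c_w)| = 0$, and $c_o$ is optimal, so $D_I(\varphi) = \tfrac{n-0}{n} = 1$. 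Since $D_I(\varphi) \le 1$ for every instance by the definition of ab-distortion, this matches the trivial upper bound and shows the distortion is exactly $1$.

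For the second part I would assume, as is standard and as every rule considered in the paper satisfies, that $s_1 \ge \cdots \ge s_m$, and set $M = \max_{i,j}|s_i-s_j| = s_1 - s_m$ and $\mu = \min_{i\ne j}|s_i-s_j|$. Fix a profile $p$, let $c_w = \varphi(p)$, let $c_o$ be optimal, write $\pi_i(c)$ for the position of $c$ in $\succcurlyeq_i$, and set $\delta_i = s_{\pi_i(c_w)} - s_{\pi_i(c_o)}$, so that $\mathrm{score}(c_w) - \mathrm{score}(c_o) = \sum_i \delta_i \ge 0$ because $c_w$ is the scoring winner. Let $A = R_\lambda(c_o) \setminus R_\lambda(c_w)$. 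For $i \in A$ the prefix observation gives $\pi_i(c_o) < \pi_i(c_w)$, and since the scores are non-increasing and the two positions are distinct we get $\delta_i \le -\mu$; for every other voter $\delta_i \le M$. Summing yields $0 \le -\mu|A| + M(n-|A|)$, hence $|A| \le \tfrac{Mn}{M+\mu}$. Because $|R_\lambda(c_o)| - |R_\lambda(c_w)| = |A| - |R_\lambda(c_w)\setminus R_\lambda(c_o)| \le |A|$, I conclude $D_I(\varphi) \le \tfrac{|A|}{n} \le \tfrac{M}{M+\mu}$, which is the claimed bound.

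The score bookkeeping above is routine; the two places needing care are the monotonicity reduction together with the degenerate case $\mu = 0$, where several positions share a score so the bound collapses to $1$ and only the trivial inequality $D_I(\varphi)\le 1$ is required, and the verification that the estimate survives the maximum over $\mathcal{P}_I$. The latter is the main subtlety, and it is handled by noting that the argument used only that $c_w$ attains the maximal score and that $\lambda$ is locally consistent with the distances, both of which hold for every profile induced by $I$ regardless of distance ties.
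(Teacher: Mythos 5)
Your proof is correct and follows essentially the same route as the paper's: local consistency makes each approval set a top-prefix of the induced ranking, so every voter in $R_\lambda(c_o)\setminus R_\lambda(c_w)$ prefers $c_o$ to $c_w$, and comparing the scores of $c_w$ and $c_o$ bounds the size of that set by $\frac{M}{M+\mu}n$. The only cosmetic differences are that you bound $|R_\lambda(c_o)\setminus R_\lambda(c_w)|$ directly rather than first bounding $|P(c_o,c_w)|$, and that you state explicitly the monotonicity assumption $s_1\geq\cdots\geq s_m$ which the paper's argument uses implicitly (and without which the claimed bound would in fact fail).
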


The bound obtained in \Cref{scoring_rules_upper_bound} is not tight in general. For example, for Plurality we have a tighter estimation.

\begin{theorem}\label[theorem]{thm:plurality_distortion}
The ab-distortion of Plurality is $\frac{m - 1}{m}$. This bound is achieved for globally consistent instances in $\mathbb{E}^1$.
\end{theorem}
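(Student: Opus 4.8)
The plan is to prove matching upper and lower bounds. The whole argument rests on one elementary consequence of local consistency together with non-emptiness of $\lambda$, which I would isolate as a backbone observation: \emph{every voter accepts the candidate she ranks first}. Indeed, if $c_t$ is the top candidate of voter $i$, then $d(i,c_t)\le d(i,c)$ for every $c\in C$; since $\lambda(i)\neq\emptyset$, choosing any $c_a\in\lambda(i)$ gives $d(i,c_t)\le d(i,c_a)$, so local consistency forces $c_t\in\lambda(i)$. This single fact drives both directions.

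For the upper bound, fix an instance $I$ with optimal candidate $c_o$, take an arbitrary induced profile $p\in\mathcal{P}_I$, and let $c_w=\varphi(p)$ be the Plurality winner. The $m$ Plurality scores are non-negative and sum to $n$, so the maximal score is at least $n/m$; as $c_w$ attains the maximum, its score is at least $n/m$. By the backbone observation every voter ranking $c_w$ first accepts $c_w$, hence $|R_\lambda(c_w)|\ge n/m$. Using the trivial bound $|R_\lambda(c_o)|\le n$ we obtain
\[
\frac{|R_\lambda(c_o)|-|R_\lambda(c_w)|}{n}\le\frac{n-n/m}{n}=\frac{m-1}{m}.
\]
Since this holds for \emph{every} $p\in\mathcal{P}_I$ (the argument is insensitive to how distance ties are resolved), it holds for the maximum defining $D_I(\varphi)$, giving $D_I(\varphi)\le\frac{m-1}{m}$.

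For the lower bound I would exhibit a globally consistent instance in $\mathbb{E}^1$ attaining equality. Take $n=mk$, a common acceptance radius $R$, and place $c_o$ at $0$. The idea is to split the electorate into $m$ groups of $k$ voters, each having a distinct candidate as its unique nearest one, so that all $m$ Plurality scores equal $n/m=k$ and the lexicographic tie-break elects a designated candidate $c_w$. Concretely, I would put $c_w$ at distance $2R-\eta$ from $c_o$ with its group squeezed at the far edge of $c_o$'s ball (position $R$, hence within $R$ of both $c_o$ and $c_w$), while the remaining $m-2$ candidates sit at tiny distinct offsets $0<\epsilon_1<\cdots<\epsilon_{m-2}$ near $0$, together with $c_o$'s own group at $0$; each near-origin group sits exactly on its candidate, so its first choice is unambiguous. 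A short check then shows all $n$ voters lie within $R$ of $c_o$, so $|R_\lambda(c_o)|=n$ and $c_o$ is optimal; only $c_w$'s group lies within $R$ of $c_w$, so $|R_\lambda(c_w)|=k$; and every candidate collects exactly $k$ first-place votes, so $c_w$ wins by tie-breaking. This yields $D_I(\varphi)=\frac{n-k}{n}=\frac{m-1}{m}$, with global consistency immediate from the single radius $R$.

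The delicate, and really the only nontrivial, point is the geometric balancing in this construction. The group assigned to $c_w$ must simultaneously (i) accept $c_o$, forcing it within distance $R$ of $0$; (ii) accept $c_w$ and in fact rank it first, forcing $c_w$ strictly closer to this group than all candidates bunched near $0$; while (iii) the other groups accept $c_o$ yet fall outside $c_w$'s ball. In one dimension these are compatible only in a narrow window, which is exactly why the parameters must be chosen so that $\eta>\epsilon_i$ for every $i$ (so the edge voters at distance $R$ strictly prefer $c_w$) and $\eta+\epsilon_i<R$ (so the near-origin groups stay beyond $c_w$'s radius). I would verify these few inequalities explicitly; non-emptiness, global consistency, and the score computation then follow without difficulty.
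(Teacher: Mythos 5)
Your proof is correct and follows essentially the same route as the paper: the upper bound via the pigeonhole observation that the Plurality winner is top-ranked (hence, by non-emptiness and local consistency, accepted) by at least $n/m$ voters, and a one-dimensional globally consistent instance in which all Plurality scores tie at $n/m$ and lexicographic tie-breaking selects a candidate accepted by only $n/m$ voters while another is accepted by all $n$. The only difference is cosmetic: the paper co-locates the candidates $c_2,\ldots,c_m$ at a single point, whereas you separate all candidates by small offsets, which avoids relying on tie-resolution within the induced rankings.
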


Yet, for a number of scoring rules the bound from \Cref{scoring_rules_upper_bound} is tight. Below, we give some sufficient conditions.
\begin{proposition}\label[proposition]{scoring_rules_tight}
The bound from \Cref{scoring_rules_upper_bound} is tight for each scoring rule satisfying the following conditions:
\begin{enumerate}
    \item $s_1 \geq \ldots \geq s_m$,
    \item $\forall_{1 \leq i \leq m-1} \quad s_1 - s_2 \leq s_i - s_{i+1}$
\end{enumerate}
even for globally consistent instances in $\mathbb{E}^1$.
\end{proposition}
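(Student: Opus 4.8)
The plan is to combine \Cref{scoring_rules_upper_bound}, which already supplies the matching upper bound, with an explicit family of worst-case instances; it therefore suffices to build, for each rule meeting conditions~1--2, a single globally consistent instance in $\mathbb{E}^1$ whose ab-distortion equals $\frac{\max_{i,j}|s_i-s_j|}{\max_{i,j}|s_i-s_j|+\min_{i,j}|s_i-s_j|}$. First I would rewrite this quantity. Put $M=s_1-s_m$ and $g=s_1-s_2$. Condition~1 gives $\max_{i,j}|s_i-s_j|=M$, and condition~2 says the first gap is no larger than any other consecutive gap, so the smallest pairwise difference is attained by a consecutive pair and equals $g$; hence $\min_{i,j}|s_i-s_j|=g$ and the target value is $f:=\frac{M}{M+g}$. (If two consecutive scores coincide, condition~2 forces $s_1=s_2$, so $g=0$ and $f=1$; the construction below degenerates correctly to this case.)

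For the instance I would work on the real line with a common acceptability radius $R$, which automatically guarantees global consistency. Fix a point $q\in(\tfrac{R}{2},R)$ and introduce two voter groups sharing the radius $R$: a group $G_1$ of $fn$ voters placed at $0$, and a group $G_2$ of $(1-f)n$ voters placed at $q$. Place $c_o$ at $0$, place $c_w$ just beyond distance $R$, and pack the remaining $m-2$ candidates into the interval $(R,2q)$, immediately to the right of $c_w$. Then each voter of $G_1$ has $c_o$ inside its ball and every other candidate beyond distance $R$, so $G_1$ accepts only $c_o$ and ranks $c_o$ first and $c_w$ second; each voter of $G_2$ has all candidates within distance $R$, with $c_w$ closest and $c_o$ (at distance $q$) farthest, so $G_2$ accepts every candidate and ranks $c_w$ first and $c_o$ last.

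It then remains to verify three facts. The optimal (most accepted) candidate is $c_o$, since $R_\lambda(c_o)=n$ while every other candidate is rejected by all of $G_1$ and hence accepted by at most $(1-f)n<n$ voters. For the winner, the score contributions of $c_w$ and $c_o$ are $\mathrm{sc}(c_w)=fn\,s_2+(1-f)n\,s_1$ and $\mathrm{sc}(c_o)=fn\,s_1+(1-f)n\,s_m$, and these are equal exactly when $(1-f)M=fg$, i.e.\ for our choice $f=\frac{M}{M+g}$; any other candidate sits at position $\ge 3$ for $G_1$ and at a position in $\{2,\dots,m-1\}$ for $G_2$, so its score is at most $fn\,s_3+(1-f)n\,s_2\le \mathrm{sc}(c_w)$. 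Thus $c_w$ attains the top score, tying only $c_o$, and labelling the candidates so that $c_w$ is $\succcurlyeq_{lex}$-maximal makes $\varphi$ elect $c_w$. Consequently $D_I(\varphi)=\frac{R_\lambda(c_o)-R_\lambda(c_w)}{n}=\frac{n-(1-f)n}{n}=f$, matching the upper bound and establishing tightness.

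The main obstacle is the geometric realizability under a single shared radius: I must simultaneously make $G_1$ accept only $c_o$ (forcing the other candidates beyond distance $R$ from $0$) and make $G_2$ accept all candidates including $c_o$ as its farthest point (forcing the whole candidate set into a window of width at most $R$ around $G_2$). The placement above with $c_o=0$, the others in $(R,2q)$, and $G_2$ at $q\in(\tfrac{R}{2},R)$ is precisely what reconciles these, and I would spell out the inequalities confirming that $c_o$ is $G_2$'s farthest accepted candidate while $c_w$ is its closest. A secondary point is ensuring $c_w$ beats \emph{all} other candidates rather than only $c_o$; this is handled by the position bookkeeping above (second for $G_1$, first for $G_2$), with $\succcurlyeq_{lex}$ breaking any residual ties. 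Finally, to make $fn$ and $(1-f)n$ integral I would take $n$ to be a multiple of $M+g$ (the scores being integers for the rules of interest), or otherwise approach the bound along a sequence of instances.
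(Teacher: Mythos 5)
Your construction is correct and essentially identical to the paper's: the same two voter groups of sizes $\frac{q}{p+q}n$ and $\frac{p}{p+q}n$ (your $fn$ and $(1-f)n$ with $q=s_1-s_m$, $p=s_1-s_2$), the first approving only the optimal candidate and ranking the winner second, the second approving all candidates and ranking the optimal one last, realized on a line with a common acceptability radius so that the two top scores tie and lexicographic tie-breaking elects the winner. The only substantive difference is that the paper handles the fully degenerate case $s_1=\cdots=s_m$ (where your formula $f=M/(M+g)$ reads $0/0$) via a separate explicit instance, whereas you only gesture at it; since your construction with all voters placed in $G_1$ still yields distortion $1$ there, this is cosmetic.
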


\Cref{scoring_rules_upper_bound,scoring_rules_tight} imply the ab-distortion for a number of scoring rules.

\begin{corollary}
There exists a globally consistent instance $I \in \mathbb{E}^1$, for which:
\begin{enumerate}
    \item the ab-distortion of k-approval is $\frac{1}{1+0}=1$,
    \item the ab-distortion of Veto is $\frac{1}{1+0}=1$,
    \item the ab-distortion of Borda is $\frac{m-1}{m-1+1}=\frac{m-1}{m}$.
\end{enumerate}
\end{corollary}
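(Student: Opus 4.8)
The plan is to read off all three values as an immediate consequence of \Cref{scoring_rules_upper_bound} and \Cref{scoring_rules_tight}, so that no new construction is required. \Cref{scoring_rules_upper_bound} gives, for every instance $I$ and every non-constant scoring rule $\varphi$ with score vector $\vec{s}$, the bound $D_I(\varphi) \leq \frac{\max_{i,j}|s_i - s_j|}{\max_{i,j}|s_i - s_j| + \min_{i,j}|s_i - s_j|}$, while \Cref{scoring_rules_tight} guarantees that this bound is actually attained by some globally consistent instance in $\mathbb{E}^1$ whenever $\vec{s}$ is non-increasing ($s_1 \geq \cdots \geq s_m$) and its first gap is smallest ($s_1 - s_2 \leq s_i - s_{i+1}$ for all $i$). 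Thus the entire proof reduces, for each of the three rules, to (i) evaluating $\max_{i,j}|s_i-s_j|$ and $\min_{i,j}|s_i-s_j|$ (the latter over pairs $i \neq j$), and (ii) checking the two structural hypotheses of \Cref{scoring_rules_tight}; the upper bound then supplies the ``$D_I(\varphi) \leq E$'' direction for all $I$ and the tightness supplies the witnessing $\mathbb{E}^1$ instance, which together match the definition of the ab-distortion being $E$.

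For $k$-approval with $2 \leq k \leq m-1$ the score vector has both repeated entries equal to $1$ and at least one entry equal to $0$; hence $\max_{i,j}|s_i - s_j| = 1$ and, since two scores coincide, $\min_{i,j}|s_i - s_j| = 0$, giving $\frac{1}{1+0} = 1$. The hypotheses of \Cref{scoring_rules_tight} hold: the vector is non-increasing, and because $s_1 = s_2 = 1$ we have $s_1 - s_2 = 0 \leq s_i - s_{i+1}$ for every $i$. Veto is the special case $k = m-1$ (for $m \geq 3$ it still carries two entries equal to $1$), so the identical computation yields ab-distortion $1$. Hence for both rules the value $1$ is not merely an upper bound but is realized by a globally consistent instance in $\mathbb{E}^1$.

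For Borda the score vector $(m-1, m-2, \ldots, 1, 0)$ has pairwise distinct entries, so $\max_{i,j}|s_i - s_j| = (m-1) - 0 = m-1$ while the smallest gap between two distinct scores is $\min_{i,j}|s_i - s_j| = 1$; substituting into \Cref{scoring_rules_upper_bound} gives $\frac{m-1}{(m-1)+1} = \frac{m-1}{m}$. The hypotheses hold once more: the vector is strictly decreasing with every consecutive gap equal to $1$, so $s_1 - s_2 = 1 \leq s_i - s_{i+1} = 1$ for all $i$, and therefore \Cref{scoring_rules_tight} certifies that $\frac{m-1}{m}$ is achieved by a globally consistent instance in $\mathbb{E}^1$.

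The only genuine subtlety is the reading of $\min_{i,j}|s_i - s_j|$, which must range over pairs $i \neq j$; otherwise it would trivially be $0$ for every rule. Under this reading the whole distinction between the rules is whether the score vector contains a repeated value: $k$-approval and Veto do, which forces the minimum to $0$ and the distortion to its worst possible value $1$, whereas Borda does not, so its minimum gap is the positive quantity $1$ and the distortion drops to $\frac{m-1}{m}$. Everything else is mechanical verification, and I do not anticipate any real obstacle beyond being careful about the admissible range of $k$ (and $m \geq 3$ for Veto) needed to guarantee a repeated score.
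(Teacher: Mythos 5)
Your proposal is correct and follows exactly the route the paper intends: the corollary is stated as an immediate consequence of Theorem \ref{scoring_rules_upper_bound} (for the upper bound) and Proposition \ref{scoring_rules_tight} (for the witnessing globally consistent instance in $\mathbb{E}^1$), and your verification of the two structural hypotheses and the evaluation of $\max_{i,j}|s_i-s_j|$ and $\min_{i\neq j}|s_i-s_j|$ for each score vector is exactly the required check. Your explicit care about the range of $k$ (and $m\geq 3$ for Veto), and the observation that a repeated score is what forces the minimum gap to $0$, correctly captures why Plurality is excluded while $k$-approval, Veto, and Borda are covered.
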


\subsection{Iterative rules}

All scoring rules that we considered have poor ab-distortion, and in particular are considerably worse than Condorcet rules (especially for instances with Condorcet winners). 

Interestingly, STV in terms of acceptability, behaves worse even than Plurality. This is somehow surprising since for distance-based distortion, STV is better than any positional scoring rules, and only slightly worse than Condorcet rules. 

\begin{theorem}\label{STV_upper_bound}
The ab-distortion of STV is $\frac{2^{m-1}-1}{2^{m-1}}$.
\end{theorem}

The above bound is tight even in one-dimensional Euclidean spaces. It is also tight if we restrict ourselves to global consistent instances. There, the hard instances that we found use $(m-2)$-dimensional Euclidean space.

\begin{proposition}\label[proposition]{prop:tightness_stv}
The bound from \Cref{STV_upper_bound} is tight for locally consistent instances from $\mathbb{E}^{1}$ and globally consistent instances from $\mathbb{E}^{m-2}$.
\end{proposition}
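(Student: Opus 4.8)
The plan is to exhibit, for every $m$, a single combinatorial preference profile on which STV attains ab-distortion exactly $\frac{2^{m-1}-1}{2^{m-1}}$, and then to realise that profile by two metric instances: a locally consistent one in $\mathbb{E}^1$ and a globally consistent one in $\mathbb{E}^{m-2}$. Since \Cref{STV_upper_bound} already supplies the matching upper bound, producing instances that meet it is all that tightness requires.

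\textbf{The combinatorial core.} Take candidates $c_1,\dots,c_m$ and $n=2^{m-1}$ voters split into groups $G_1,\dots,G_m$ with $|G_j|=2^{j-1}$ for $1\le j\le m-1$ and $|G_m|=1$. Group $G_j$ ranks $c_j$ first and $c_1$ second (with $G_1$ ranking $c_1$ first), and the tail of each order is arranged so that, once $c_1,\dots,c_{j-1}$ are gone, the top surviving candidate of $G_j$ is $c_m$. Let every voter accept exactly its prefix down to $c_1$, so $G_j$ accepts $\{c_j,c_1\}$, $G_1$ accepts $\{c_1\}$, and $G_m$ accepts $\{c_m,c_1\}$; this is locally consistent. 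Then $R_\lambda(c_1)=n$ while $R_\lambda(c_m)=1$, so $c_o=c_1$. For the STV run the first-round scores are $(1,2,4,\dots,2^{m-2},1)$, so $c_1$ and $c_m$ tie for the minimum; fixing $\succcurlyeq_{lex}$ so that $c_m$ is lexicographically largest (hence never eliminated in a tie) resolves the tie against $c_1$, and $G_1$ transfers to $c_m$. Inductively, at round $j$ both $c_j$ and $c_m$ carry $2^{j-1}$ votes, $c_j$ is eliminated, and $G_j$ transfers to $c_m$, doubling its score to $2^{j}$; after $m-1$ rounds $c_m$ is the sole survivor. Hence $D_I=\frac{R_\lambda(c_o)-R_\lambda(c_m)}{n}=\frac{2^{m-1}-1}{2^{m-1}}$ (scaling every group by a common factor makes $n$ any multiple of $2^{m-1}$).

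\textbf{Local realisation in $\mathbb{E}^1$.} I would place $c_m$ at $0$ and $c_j$ at $-2^{j-1}$, and put the $2^{j-1}$ voters of $G_j$ just to the right of $c_j$ (with $G_m$ next to $0$). The geometric gaps double, which is exactly what makes the transfers jump: once $c_1,\dots,c_{j-1}$ are deleted, $c_m$ (at distance $\approx 2^{j-1}-\delta$) is closer to $G_j$ than the next interior candidate $c_{j+1}$ (at distance $\approx 2^{j-1}+\delta$), so the vote of $G_j$ moves to $c_m$ as required. Choosing each voter's radius just below $2^{j-1}$ but at least $2^{j-1}-1$ makes $G_j$ accept every interior candidate up to $c_1$ yet never reach $c_m$, giving $R_\lambda(c_1)=n$ and $R_\lambda(c_m)=1$ exactly as above. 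The radii grow with $j$, so this instance is only locally consistent.

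\textbf{Global realisation in $\mathbb{E}^{m-2}$, and the main obstacle.} Here a single common radius $r$ must certify simultaneously, for every $j$, that $c_j$ and $c_1$ lie within $r$ of $G_j$ while the other $m-2$ candidates lie beyond $r$; one dimension is too rigid for this nested pattern once $m\ge 4$. I would therefore embed the candidates as (a perturbation of) a simplex and locate each $G_j$ so that $\lVert G_j-c_j\rVert=\lVert G_j-c_1\rVert=r$ and $\lVert G_j-c_k\rVert>r$ otherwise, while the induced rankings still yield the elimination cascade above; the base case $m=3$ already fits in $\mathbb{E}^{1}=\mathbb{E}^{m-2}$, and each further candidate consumes one extra coordinate to keep these constraints feasible. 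This global step is the delicate part: forcing one radius to separate each group's two accepted candidates from the remaining ones is a system of simultaneous equalities and strict inequalities whose feasibility is what pins the dimension down to exactly $m-2$. In both realisations the remaining check is that the prescribed, tie-laden elimination order is the one actually produced by STV under $\succcurlyeq_{lex}$, since at every round $c_j$ and $c_m$ are tied for the minimum score.
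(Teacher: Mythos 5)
Your combinatorial profile and its locally consistent realisation in $\mathbb{E}^1$ are correct and essentially coincide with the paper's construction (the paper places $c_m, c_1, c_2, \ldots$ at $1, 2, 4, \ldots$ with voters at the candidates' positions; you mirror this with voters displaced by $\delta$, which changes nothing essential). One cosmetic inconsistency: your ``combinatorial core'' stipulates that $G_j$ ranks $c_1$ second, but your $\mathbb{E}^1$ embedding actually induces $c_j \succ c_{j-1} \succ \cdots \succ c_1 \succ c_m \succ \cdots$; this is harmless because $c_1,\ldots,c_{j-1}$ are all eliminated before $c_j$, so the transfer still lands on $c_m$, and local consistency forces $G_j$ to accept the whole prefix $\{c_j,\ldots,c_1\}$ rather than just $\{c_j,c_1\}$, which still gives $R_\lambda(c_1)=n$ and $R_\lambda(c_m)=|G_m|$.

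The genuine gap is the globally consistent instance in $\mathbb{E}^{m-2}$, which is half of the statement and which you explicitly defer: you write down the system of constraints (one common radius $r$ with $\lVert G_j-c_j\rVert=\lVert G_j-c_1\rVert=r$ and all other candidates strictly outside) and assert that a ``perturbation of a simplex'' makes it feasible, but you never exhibit the points or verify the inequalities. The paper closes this by a concrete symmetric construction: place $c_2,\ldots,c_m$ at the vertices of a regular $(m-2)$-simplex with unit edges, put $c_1$ at its circumcenter $O$, put each group $g_i$ ($i>1$) at the midpoint of the segment $Oc_i$, and set every acceptability radius to $\nicefrac{R}{2}$ where $R$ is the circumradius. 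Then $d(g_i,c_1)=d(g_i,c_i)=\nicefrac{R}{2}$, while for $j\neq 1,i$ the bound $R=\sqrt{\tfrac{m-2}{2(m-1)}}<1=d(c_i,c_j)$ together with the triangle inequality gives $d(g_i,c_j)\geq d(c_i,c_j)-\nicefrac{R}{2}>\nicefrac{R}{2}$, so the acceptance sets are exactly as required with a single common radius. Crucially, the symmetry makes all candidates outside $\{c_1,c_i\}$ \emph{equidistant} from $g_i$, so the induced ranking is free to place $c_m$ immediately after $c_1$, which is what drives every transfer to $c_m$; your perturbation idea would instead have to make $c_m$ strictly closer for every group simultaneously, and you give no argument that this can be done in dimension $m-2$. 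Until the global instance is actually constructed and these distance inequalities are checked, the second half of the proposition is not proved.
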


\section{Related Work}
The spatial model of preferences is quite popular in the social choice and political science literature. For example seminal works studying spacial models we refer the reader to~\cite{DavisHinich66,Plott1967,enelow1984spatial,enelow1990advances,mckelvey1990,merrill1999unified,schofield2007spatial}.

The concept of distortion was first introduced by 
\citet{pro-ros:c:distortion}. In their work they did not assume the existence of a metric space, but rather used a generic cardinal utility model (where the voters can have arbitrarily utilities for candidates). This model was later studied by 
\citet{car-pro:j:embeddings} and 
\citet{Bou-etal}. Recently, \citet{ben-pro-qui:distortion_welfare_functions} introduced the concept of distortion for social welfare functions, i.e., functions mapping voters preferences to rankings over candidates, and \citet{ben-nat-pro-sha:participatory_budgeting_elicitation} adapted and used the concept of distortion in the context of participatory budgeting to evaluate different methods of preference elicitation. 
The studies of the concept of distortion in metric spaces were initiated by~\citet{ans-bha-elk-pos-sko:j:distortion}, and then continued by~\citet{AP16}, \citet{FFG}, \citet{GKM16}, and \citet{GAX17}.

The analysis of the distortion forms a part of a broader trend in social choice stemming from the utilitarian perspective. For classic works in welfare economics that discuss the utilitarian approach we refer the reader to the article of \citet{ng1997case} and the book of \citet{roemer1998theories}. This approach has also recently received a lot of attention from the computer science community. Apart from the papers that directly study the concept of distortion that we discussed before, examples include the works of \citet{filos2013truthful}, \citet{branzei2013bad}, and \citet{chakrabarty2014welfare}.     

\section{Conclusion}

In this paper we have extended the concept of distortion of voting rules to approval-based preferences. This extension allows to compare rules that take different types of input: approval sets and rankings over the candidates. To the best of our knowledge, only very few formal methods are known that allow for such a comparison. We are aware of only one work that formally relates these two models: \citet{Laslier2010} proved that in the strong Nash equilibrium Approval Voting selects the Condorcet winner, if such exists.    

Our contribution is twofold. First, we have determined the distortion of Approval Voting, and explained how this distortion depends on voters' approval sets. We have shown that the socially best outcome is obtained when voters approve not too many and not too few candidates. If the lengths of voters' acceptability radiuses are all equal, the best distortion is obtained when the approval sets are such that between $\frac{1}{4}$ and $\frac{1}{2}$ of the voters approve the optimal candidate.  

Second, we have defined a new concept of acceptability-based distortion (ab-distortion). Here, we assume that the voters have certain acce\-ptability thresholds; the ab-distortion of a given rule $\varphi$ measures how many voters (in the worst-case) would be satisfied from the outcomes of $\varphi$. We have determined the ab-distortion for a number of election rules (our results are summarized in \Cref{tab:summary}), and reached the following conclusions. The analysis of the classic and the acceptability-based distortions both suggest that Condorcet rules perform better than scoring and iterative ones. Further, our acceptability-based approach suggests that Ranked Pairs and the Schulze's rule are particularly good rules, in particular significantly outperforming the Copeland's rule. Thus, our study recommends Ranked Pairs or the Schulze's method as rules that robustly perform well for both criteria (total distance, and acceptability). The question whether they are the only natural ranking-based rules performing well for both criteria is open. Approval Voting is also a very good rule that can be considered an appealing alternative to them, provided the sizes of the approval sets of the voters are appropriate.  

\subsubsection*{Acknowledgments}
The authors were supported by the Foundation for Polish Science within the Homing programme (Project title: "Normative Comparison of Multiwinner Election Rules").

\bibliographystyle{named}
\bibliography{main}

\appendix
\section{Proofs Omitted from the Main Text}

\subsection{Proof of \Cref{prop:approval_general_worst_distortion}}

\begin{repproposition}{prop:approval_general_worst_distortion}
There exists an instance $I \in \mathbb{E}^1$ such that $D_I(\varphi_{AV}) = +\infty$.
\end{repproposition}

\begin{proof}
Consider the instance $I$ from \Cref{fig:approval}. We have two candidates $C=\{c_1, c_2\}$, $c_2 \succcurlyeq_{lex} c_1$, and $n$ voters. The voters are identical---for each $i \in N$ we have $d(i, c_1) = 0$, and $d(i, c_2) = 1$ (thus $c_1 \succcurlyeq_i c_2$), and they all approve all the candidates.

In $I$, $c_1$ is the optimal candidate, yet Approval Voting picks $c_1$ and $c_2$ which, together with the fact that $c_2 \succcurlyeq_{lex} c_1$, implies that $c_2$ is the winner. Thus, we get that $D_I(\varphi_{AV}) = \frac{n}{0} = +\infty$. 
\end{proof}

\begin{figure}[btp]
\centering
\begin{tikzpicture}
\node[label=$1$] at (1, 0) {};
\draw[style=dashed] (0,0) -- (1,0) -- (2, 0) {};

\node[circle, fill=black, label=$c_1$] at (0,0.1) {};
\node[circle, fill=black, label=$c_2$] at (2,0) {};

\node[circle, draw=black, label=below:$n$] at (0,-0.1) {};

\end{tikzpicture}
    \caption{Illustration of the hard instance used in the proof of \Cref{prop:approval_general_worst_distortion}. The white point indicates the position of all the voters, and black points correspond to the candidates. The length of each acceptability radius is 1---as it is the same for all the voters, the instance is globally consistent.}
\label{fig:approval}
\end{figure}
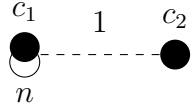

\subsection{Proof of \Cref{prop:approval_general_best_distortion}}

\begin{repproposition}{prop:approval_general_best_distortion}
For each instance $I \in \mathbb{I}$, there is an approval based profile $p$ consistent with $I$ such that $\varphi(p)$ is the optimal candidate (minimizing the total distance to voters).
\end{repproposition}

\begin{proof}
Consider an instance $I$, and let $c_o$ be an optimal candidate in $I$. Consider the following approval-based profile consistent with $I$: each voter approves $c_o$ and all the candidates more preferred to $c_o$, but does not approve any candidate less preferred than $c_o$. Candidate $c_o$ gets $n$ votes. Thus, $c_o$ will be the winner, unless some other candidate, call it $c$, also received $n$ votes and is preferred by the tie-breaking rule. If this is the case, then $c$ must Pareto dominate $c_o$, which means that $c$ is also an optimal candidate. This completes the proof.
\end{proof}

\subsection{Proof of \Cref{approval_voting_global_upper_bound}}

In the proof we will also use the following simple inequality:
\begin{lemma}\label{eqdec_lemma}
For each positive numbers $a,b,c,d$ such that $a \geq b, c \geq d$ we have that:
\begin{equation*}
     \frac{a + c}{b + c} \leq \frac{a + d}{b + d}
\end{equation*}
\end{lemma}
\begin{proof}
For each positive numbers $a,b,c,d$ such that $a \geq b, c \geq d$, we have:
\begin{align*}
    &0 \leq (a - b)(c - d) \iff \\
    &ad + bc \leq ac + bd \iff \\
    &ab + ad + bc + cd \leq ab + ac + bd + cd \iff \\
    &(a+c)(b+d) \leq (a+d)(b+c) \iff \\
    &\frac{a + c}{b + c} \leq \frac{a + d}{b + d}
\end{align*}
\end{proof}

\begin{reptheorem}{approval_voting_global_upper_bound}
    For each globally consistent $p$-efficient instance $I$, we have the following results:
    \begin{equation*}
        D_I(\varphi_{AV}) \leq 
        \begin{cases}
        +\infty & \text{for } p  \in \{0, 1\}\\
        \frac{1-p}{p} & \text{for }p \in (0; \frac{1}{4}]\\
        3 & \text{for } p\in [\frac{1}{4};\frac{1}{2}]\\
        \frac{2-p}{1-p} & \text{for }p \in [\frac{1}{2};1)\text{.}
        \end{cases}
    \end{equation*}
   The above function is depicted in \Cref{fig:approval_voting_global_upper_bound}.
\end{reptheorem}

\begin{proof}
Let $I$ be a globally consistent $p$-efficient instance. Assume that $p \notin \{0,1\}$ (otherwise, the upper bound $+\infty$ is obtained directly from the definition of distance-based distortion).
Let $c_o$ and $c_w$ denote, respectively, the optimal candidate in $I$ and the winner returned by Approval Voting. As $I$ is globally consistent, there exists $r \in \reals$ which is the length of acceptability radiuses of all the voters.

We first provide a few basic inequalities, which will be used in the further part of the proof. We will refer to these inequalities using their numbers---this will make the steps of our reasoning transparent. 
As $c_w$ is the winner of the voting, we have:
\begin{equation}\label{eq1}
    pn = |\voters{o}| \leq |\voters{w}|
\end{equation}
From the definition of the voting radius:
\begin{equation}\label{eq3}
    \forall S \subseteq \voters{o}^\complement \quad |S|r \leq \sumv{S}{o}
\end{equation}
\begin{equation}\label{eq40}
    \forall S \subseteq \voters{w} \quad |S|r \geq \sumv{S}{w}
\end{equation}
\begin{equation}\label{eq4}
    \forall S \subseteq N \quad 0 \leq \sumv{S}{o}
\end{equation}
From trivial set properties:
\begin{equation}\label{eq7}
    |\voters{o}| + |\voters{w}| - |\voters{o} \cap \voters{w}| + |(\voters{o} \cup \voters{w})^\complement| = n
\end{equation}
\begin{align}\label{eq8}
    |(\voters{o} \cup \voters{w})^\complement| &\overset{\eqref{eq7}}{=} n - |\voters{o}| - |\voters{w}| + |\voters{o} \cap \voters{w}| \nonumber\\
    &\overset{\eqref{eq1}}{\leq} n - 2pn + |\voters{o} \cap \voters{w}|
\end{align}

\begin{equation}\label{voteForBothInVoteForO}
    |\voters{o} \cap \voters{w}| \leq |\voters{o}|
\end{equation}
\\~\\
From the triangle inequality:

\begin{equation}\label{triangleInequality}
    \forall v \in N \quad \distancev{w} \leq \distancev{o} + \distance
\end{equation}
\begin{equation}\label{eq5}
    \forall v \in N \quad \distance \leq \distancev{o} + \distancev{w}
\end{equation}
\begin{equation}\label{eq6}
    \forall v \in N \quad \distance - \distancev{w} \overset{\eqref{eq5}}{\leq}  \distancev{o}
\end{equation}
\begin{equation}\label{eq600}
    \forall S \subseteq \voters{w} \quad |S|(\distance - r) \leqrefs{eq6}{eq40} \sumv{S}{o}
\end{equation}
\\~\\
From Lemma \ref{eqdec_lemma}:
\begin{equation}\label{eqdec}
     \forall a,b,c,d \in \mathbb{R}_+,a\geq b, c\geq d \quad \frac{a + c}{b + c} \leq \frac{a + d}{b + d}
\end{equation}
\begin{equation}\label{eqskip}
    \frac{a + c}{b + c} \leq \frac{a}{b}
\end{equation}
The further part of the proof will be split into three cases:
\begin{enumerate}[\quad {Case} 1]
\item $d(c_o, c_w) \leq r$, 
\item $r \leq d(c_o, c_w) \leq 2r$, and
\item $2r \leq d(c_o, c_w)$. 
\end{enumerate}
For each $p$, the final worst-case distortion is the maximum of the worst-case distortions in all these three cases.

\paragraph{Analysis of Case 1.} The following inequality holds:
\begin{equation}\label{dLEQR}
    d(c_o, c_w) \leq r.
\end{equation}
In this case we have:

\begin{align}
    D_I(\varphi_{AV}) &= \frac{\sumv{N}{w}}{\sumv{N}{o}} \leqref{triangleInequality} \frac{\sumv{N}{o} + n\distance}{\sumv{N}{o}}.
\end{align}
As the numerator is greater than the denumerator (because $D_I(\varphi_{AV}) \geq 1$):
\begin{align}
    D_I(\varphi_{AV}) &\leqrefs{eq3}{eqdec} \frac{|\voters{o}^\complement|r + n\distance}{|\voters{o}^\complement|r} \leqref{dLEQR} \frac{|\voters{o}^\complement|r + nr}{|\voters{o}^\complement|r} \nonumber\\
    &= \frac{(n - pn) + n}{n - pn} = \frac{2 - p}{1 - p}.
\end{align}

\paragraph{Analysis of Case 2.} The following inequalities hold:
\begin{equation}\label{RLEQd}
    2r \geq d(c_o, c_w) \geq r.
\end{equation}
In such case, we assess the distortion as follows:

\begin{align*}
    &D_I(\varphi_{AV}) = \frac{\sumv{N}{w}}{\sumv{N}{o}} \\
    &\quad = \frac{\sumv{\voters{w}}{w} + \sumvn{\voters{w}}{w}}{\sumv{\voters{w}}{o} + \sumvn{\voters{w}}{o}} \\
    &\quad \leqref{triangleInequality} 
    \frac{\sumv{\voters{w}}{w} + \sumvn{\voters{w}}{o} + |\voters{w}^\complement|\distance}{\sumv{\voters{w}}{o} + \sumvn{\voters{w}}{o}} \\
    &\quad \leqref{eq40} \frac{|\voters{w}|r  + \sumvn{\voters{w}}{o} + |\voters{w}^\complement|\distance}{\sumv{\voters{w}}{o} + \sumvn{\voters{w}}{o}} \\
    &\quad = \frac{|\voters{w}|r  + \sumv{\voters{o} \setminus \voters{w}}{o} + \sumvn{\voters{o} \cup \voters{w}}{o} + |\voters{w}^\complement|\distance}{\sumv{\voters{w}}{o} + \sumv{\voters{o} \setminus \voters{w}}{o} + \sumvn{\voters{o} \cup \voters{w}}{o}}.
\end{align*}
As the numerator is greater than the denumerator (because $D_I(\varphi_{AV}) \geq 1$):
\begin{align*}
    &D_I(\varphi_{AV}) \leqrefs{eq4}{eqskip} \frac{|\voters{w}|r + \sumvn{\voters{o} \cup \voters{w}}{o} + |\voters{w}^\complement|\distance}{\sumv{\voters{w}}{o} + \sumvn{\voters{o} \cup \voters{w}}{o}} \\
    &\quad \leqrefs{eq3}{eqdec} \frac{|\voters{w}|r + |(\voters{w} \cup \voters{o})^\complement|r + |\voters{w}^\complement|\distance}{\sumv{\voters{w}}{o} + |(\voters{w} \cup \voters{o})^\complement|r} \\
    &\quad = \frac{|\voters{w}|r + |(\voters{w} \cup \voters{o})^\complement|r + |\voters{w}^\complement|\distance}{\sumv{\voters{w} \cap \voters{o}}{o} + \sumv{\voters{w} \setminus \voters{o}}{o} + |(\voters{w} \cup \voters{o})^\complement|r} \\
    &\quad \leqref{eq600} \frac{|\voters{w}|r + |(\voters{w} \cup \voters{o})^\complement|r + |\voters{w}^\complement|\distance}{|\voters{w} \cap \voters{o}|(\distance - r) + \sumv{\voters{w} \setminus \voters{o}}{o} + |(\voters{w} \cup \voters{o})^\complement|r} \\
    &\quad \leqref{eq3} \frac{|\voters{w}|r + |(\voters{w} \cup \voters{o})^\complement|r + |\voters{w}^\complement|\distance}{|\voters{w} \cap \voters{o}|(\distance - r) + |\voters{w} \setminus \voters{o}|r + |(\voters{w} \cup \voters{o})^\complement|r} \\
    &\quad = \frac{|\voters{w}|r + |(\voters{w} \cup \voters{o})^\complement|r + |\voters{w}^\complement|r + |\voters{w}^\complement|(\distance-r)}{|\voters{w} \cap \voters{o}|(\distance - r) + (n - |\voters{o}|)r} \\
    &\quad \leqref{eq1}\frac{nr + |(\voters{w} \cup \voters{o})^\complement|r + (n-pn)(\distance-r)}{|\voters{w} \cap \voters{o}|(\distance - r) + (n - pn)r} \\
    &\quad \leqref{eq8} \frac{nr + (n - 2pn + |\voters{o} \cap \voters{w}|)r + (n-pn)(\distance-r)}{|\voters{w} \cap \voters{o}|(\distance - r) + (n - pn)r} \\
    &\quad =\frac{2nr - 2pnr + |\voters{o} \cap \voters{w}|r + (n-pn)(\distance-r)}{|\voters{w} \cap \voters{o}|(\distance - r) + (n - pn)r} \\
    &\quad =\frac{2nr - 2pnr + |\voters{o} \cap \voters{w}|(2r - \distance) + |\voters{o} \cap \voters{w}|(\distance - r)}{|\voters{w} \cap \voters{o}|(\distance - r) + (n - pn)r} \\
    &\qquad\qquad +\frac{ (n-pn)(\distance-r)}{|\voters{w} \cap \voters{o}|(\distance - r) + (n - pn)r} \\
    &\quad \leqref{eqskip} \frac{2nr - 2pnr + |\voters{o} \cap \voters{w}|(2r - \distance) + (n-pn)(\distance-r)}{(n - pn)r} \\
    &\quad =\frac{2n-pn}{n-pn} + \frac{- pnr + |\voters{o} \cap \voters{w}|(2r - \distance) + (n-pn)(\distance-r)}{(n - pn)r} \\
    &\quad \leqrefs{voteForBothInVoteForO}{RLEQd} \frac{2-p}{1-p} + \frac{- pnr + pn(2r - \distance) + (n-pn)(\distance-r)}{(n - pn)r} \\
    &\quad =\frac{2-p}{1-p} + \frac{pn(r - \distance) + (n-pn)(\distance-r)}{(n - pn)r} \\
    &\quad =\frac{2-p}{1-p} + \frac{(n-2pn)(\distance-r)}{(n - pn)r} \\
    &\quad =\frac{2-p}{1-p} + \frac{(1-2p)(\distance-r)}{(1 - p)r} \text{.}
\end{align*}
For $p \geq \frac{1}{2}$ the second part of the final sum is not positive, so we have:

\begin{equation*}
    \frac{2-p}{1-p} + \frac{(1-2p)(\distance-r)}{(1 - p)r} \leq \frac{2-p}{1-p}.
\end{equation*}
On the other hand, for $p \leq \frac{1}{2}$ we have:
\begin{align*}
    \frac{2-p}{1-p} + \frac{(1-2p)(\distance-r)}{(1 - p)r} 
    &\leqref{RLEQd} \frac{2-p}{1-p} + \frac{(1-2p)(2r-r)}{(1 - p)r} \\
    &= \frac{2-p}{1-p} + \frac{1-2p}{1 - p} = \frac{3 - 3p}{1 - p} = 3.
\end{align*}

\paragraph{Analysis of Case 3.} The following inequality holds:
\begin{equation}\label{2RLEQd}
    2R \leq \distance.
\end{equation}
First, let us observe that in this case sets $\voters{o}$ and $\voters{w}$ are disjoint: 
\begin{equation}\label{eqdisjoint_voters_o_w}
    \voters{o} \cap \voters{w} = \emptyset
\end{equation}. 
It also holds that:
\begin{equation*}
    p \leqrefs{eq1}{eq7} \frac{1}{2}.
\end{equation*}
Then we can limit the distortion (partially analogically as in Case 2) as follows:
\begin{align}\label{eqmain3}
\begin{split}
    D_I(\varphi_{AV}) &= \frac{\sumv{N}{w}}{\sumv{N}{o}} \\
    &= \frac{\sumv{\voters{o}}{w} + \sumv{\voters{w}}{w} + \sumvn{\voters{o} \cup \voters{w}}{w}}{\sumv{\voters{o}}{o} + \sumv{\voters{w}}{o} + \sumvn{\voters{o} \cup \voters{w}}{o}} \\
    &\leqref{triangleInequality} \frac{\sumv{\voters{o}}{o} + |\voters{o}|\distance + \sumv{\voters{w}}{w}}{\sumv{\voters{o}}{o} + \sumv{\voters{w}}{o} + \sumvn{\voters{o} \cup \voters{w}}{o}} \\
    &+\frac{\sumvn{\voters{o} \cup \voters{w}}{o} + |(\voters{o} \cup \voters{w})^\complement|\distance}{\sumv{\voters{o}}{o} + \sumv{\voters{w}}{o} + \sumvn{\voters{o} \cup \voters{w}}{o}} \text{.}
\end{split}
\end{align}
As the numerator is greater than the denumerator (because $D_I(\varphi_{AV}) \geq 1$):
\begin{align*}
    D_I(\varphi_{AV}) &\leqrefs{eq4}{eqskip} \frac{|\voters{o}|\distance + \sumv{\voters{w}}{w}}{\sumv{\voters{w}}{o} + \sumvn{\voters{o} \cup \voters{w}}{o}} \\
    &+\frac{\sumvn{\voters{o} \cup \voters{w}}{o} + |(\voters{o} \cup \voters{w})^\complement|\distance}{\sumv{\voters{w}}{o} + \sumvn{\voters{o} \cup \voters{w}}{o}} \\
    &\leqrefs{eq3}{eqdec} \frac{|\voters{o}|\distance + \sumv{\voters{w}}{w}}{\sumv{\voters{w}}{o} + |(\voters{o} \cup \voters{w})^\complement|r} \\
    &+ \frac{|(\voters{o} \cup \voters{w})^\complement|r + |(\voters{o} \cup \voters{w})^\complement|\distance}{\sumv{\voters{w}}{o} + |(\voters{o} \cup \voters{w})^\complement|r} \\
    &\leqrefs{eq40}{eq600} \frac{|\voters{o}|\distance + |\voters{w}|r}{|\voters{w}|(\distance - r) + |(\voters{o} \cup \voters{w})^\complement|r} \\
    &+ \frac{|(\voters{o} \cup \voters{w})^\complement|r + |(\voters{o} \cup \voters{w})^\complement|\distance}{|\voters{w}|(\distance - r) + |(\voters{o} \cup \voters{w})^\complement|r} \\
    &\overset{\eqref{eqdisjoint_voters_o_w}}{=} \frac{(n - |\voters{w}|)\distance + (n - |\voters{o}|)r}{|\voters{w}|(\distance - r) + (n - |\voters{w}| - |\voters{o}|)r} \\
    &= \frac{(n - |\voters{w}|)\distance + (n - |\voters{o}|)r}{|\voters{w}|(\distance - 2r) + (n - |\voters{o}|)r} \\
    &\leqref{eq1} \frac{(n - pn)\distance + (n - pn)r}{pn(\distance - 2r) + (n - pn)r} \\
    &= \frac{(1 - p)(\distance + r)}{p(\distance - r) + (1 - 2p)r} \\
    &= \frac{2(1 - p)(\distance + r)}{2p(\distance - r) + (2p - 1)(\distance - r) + (1 - 2p)(\distance + r)} \\
    &= \frac{2(1 - p)}{(4p - 1)\frac{\distance - r}{\distance + r} + 1 - 2p} \text{.}
\end{align*}
Now we can easily see that:

\begin{equation*}
    \frac{1}{3} \leqref{2RLEQd} \frac{\distance - r}{\distance + r} < 1.
\end{equation*}
Hence, for $p \in [\frac{1}{4}; \frac{1}{2}]$ we can continue our calculations as follows:

\begin{equation*}
    \frac{2(1 - p)}{(4p - 1)\frac{\distance - r}{\distance + r} + 1 - 2p} \leq \frac{2(1 - p)}{(4p - 1)\frac{1}{3} + 1 - 2p} = \frac{6(1 - p)}{4p - 1 + 3 - 6p} = \frac{6(1 - p)}{2 - 2p} = 3.
\end{equation*}
For $p \in [0; \frac{1}{4}]$ we can continue the calculations in a different way:

\begin{equation*}
    \frac{2(1 - p)}{(4p - 1)\frac{\distance - r}{\distance + r} + 1 - 2p} \leq \frac{2(1 - p)}{4p - 1 + 1 - 2p} = \frac{2(1 - p)}{2p} = \frac{1 - p}{p}.
\end{equation*}

\paragraph{Summarizing the results for the particular cases.}
Finally, we have the following results:
\begin{enumerate}
    \item For $p \in \{0,1\}$: $+\infty$.
    \item For $p \in (0; \frac{1}{4}]$: $\max(\frac{2 - p}{1-p}, 3, \frac{1 - p}{p}) = \frac{1 - p}{p}$.
    \item For $p \in [\frac{1}{4}; \frac{1}{2}]$: $\max(\frac{2 - p}{1-p}, 3, 3) = 3$.
    \item For $p \in [\frac{1}{2}; 1)$: $\max(\frac{2 - p}{1-p}, \frac{2 - p}{1-p}) = \frac{2 - p}{1-p}$.
\end{enumerate}

The hard instances for different values of $p$ are illustrated in \Cref{fig:distInf}. 
\end{proof}

\subsection{Proof of \Cref{prop:global_consistent_optimistic_av}}

\begin{repproposition}{prop:global_consistent_optimistic_av}
For each instance $I \in \mathbb{I}$, there exists an approval profile $p$ globally consistent with $I$, such that
\begin{equation*}
   \frac{ \sum_{i \in N} d(i, \varphi(p))}{ \sum_{i \in N} d(i, c_o)} \leq \frac{11}{3} \text{.}
\end{equation*}
\end{repproposition}
\begin{proof}
    Let us consider a globally consistent preference approval-based profile $A$ induced by $I$ satisfying the following conditions:
    \begin{enumerate}
        \item at least $\nicefrac{n}{4}$ voters approve an optimal candidate $c_o$,
        \item the length of acceptability radiuses $R$ is the shortest that satisfies the condition above.
    \end{enumerate}
    If the number of voters approving $c_o$ is less than $\frac{n}{2}$ then $A$ is $p$-efficient for some $p\in [\frac{1}{4};\frac{1}{2}]$ and the statement is implied directly by \Cref{approval_voting_global_upper_bound}. 
     Suppose then that $p > \frac{1}{2}$. It means that there exists subsets of voters $S \subseteq N$, such that $|S| < \frac{n}{4}$ and for each $i \in R_A(c_o) \setminus S$ we have that $d(i, c_o) = R$, while for each $i \in S$ we have that $d(i, c_o) < R$.
    \\~\\
    As $R_A(c_w) \geq R_A(c_o)$, we have that there exists a voter approving both $c_o$ and $c_w$, hence $\distance \leq 2R$. Then from the triangle inequality we obtain the following result:
    \begin{align*}
        \frac{\sumv{N}{w}}{\sumv{N}{o}} &\leq \frac{\sumv{N}{o} + n\distance}{\sumv{N}{o}} = 1+\frac{n\distance}{\sumv{N}{o}} \\
                                        &= 1+\frac{n\distance}{\sumv{N\setminus S}{o} + \sumv{S}{o}} \\
                                        &\leq 1+\frac{n\distance}{|N\setminus S|R + 0} \leq 1+\frac{n\distance}{\frac{3}{4}nR} \leq 1+\frac{2nR}{\frac{3}{4}nR} = \frac{11}{3} \text{.}
    \end{align*}
\end{proof}

\subsection{Proof of \Cref{thm:smith_lower_bound}}

\begin{reptheorem}{thm:smith_lower_bound}
For each $\ell \in \mathbb{N}$ and each ranking-based rule $\varphi$, there exists a globally consistent instance $I$ such that:
\begin{enumerate}
    \item the size of the Smith set in the ranking-based profile induced by $I$ equals $\ell$,
    \item $D_I(\varphi) = 
        \begin{cases}
        \frac{\ell-1}{\ell} & \text{for }\ell \geq 2\\
        \frac{1}{2} & \text{for } \ell = 1.
        \end{cases}$.
\end{enumerate}
\end{reptheorem}

\begin{proof}
    First, let us consider the case when $\ell \geq 2$.
    Note that $\varphi$ does not depend neither on the acceptability function nor on the specific distance values in the metric space. We will now construct a set of $\ell$ instances $\mathcal{I}=\{I_1, I_2,\ldots,I_\ell\}$ such that for each instance $I_i$ the following conditions are satisfied:
    \begin{enumerate}
        \item The number of candidates $m$ equals $\ell$.
        \item The voters are divided into groups $G_{(i, 1)}, G_{(i, 2)},\ldots, G_{(i, \ell)}$---each group contains $\frac{n}{\ell}$ voters and corresponds to a single point in the metric space.
        \item The ranking-based preferences of the voters from each group are the same and form a cyclic shift of the vector $(c_1, c_2,\ldots, c_\ell)$.
        \item The voters from group $G_{(i, k)}$ have ranking\footnote{For the sake of simplicity, in the proof we sometimes use the notation $c_i, G_{(i,k)}, I_i$ also for $i,k < 1$ or for $i,k > \ell$---in these cases, we mean the intuitive modulo notation $((i-1)\text{ mod } \ell) + 1$, $((k-1)\text{ mod } \ell) + 1$.}
        \begin{align*}
            c_{i-k+1} \succcurlyeq \ldots \succcurlyeq c_\ell \succcurlyeq c_1 \succcurlyeq \ldots \succcurlyeq c_{i-k} \text{,} 
        \end{align*}
        and accept the first $k$ candidates from the above ranking (hence, all the voters accept $c_i$ and only voters from group $G_{(i, \ell)}$ accept $c_{i+1}$).
    \end{enumerate}
    For each instance we have $\nicefrac{n}{\ell}$ voters with preferences $c_1 \succcurlyeq \ldots \succcurlyeq c_\ell$, $\nicefrac{n}{\ell}$ voters with preferences $c_\ell \succcurlyeq c_1 \succcurlyeq \ldots \succcurlyeq c_{\ell-1}$, etc.
    Hence, the ranking-based preference profile induced by each instance is the same subject to the permutation of the voters. Consequently, w.l.o.g., we can assume that in each instance the winner elected by $\varphi$ is the same. Besides, one can easily verify that in each instance all the candidates are in the Smith set.

    Suppose that $\varphi$ elects candidate $c_i$. Then we take instance $I_{i-1}$ as a witness---here, the optimal candidate is $c_{i-1}$, acceptable for $n$ voters, and candidate $c_i$ is acceptable only for $\nicefrac{n}{\ell}$ voters from $G_{(i-1,\ell)}$. Thus, for this instance we obtain the ab-distortion of $1 - \nicefrac{1}{\ell} = \nicefrac{\ell-1}{\ell}$. 
    To complete the proof for the case when $\ell \geq 2$, we need to show that it is possible to construct the instances from $\mathcal{I}$ using globally consistent acceptability functions.

    Each instance $I_i$ can be constructed through the following procedure:
    \begin{enumerate}
        \item Put all the candidates in a metric space $\mathbb{R}^{\ell-1}$ so that they are vertices of a regular ($\ell$-1)-simplex with all edges equal to 1.
        \item Set the acceptability radius of all the voters to the length of the circumradius of the simplex; let $R$ denote the length of this radius.
        \item For each $k$, we locate $\nicefrac{n}{\ell}$ voters from $G_{(i,k)}$ in the following way. Consider the subset of $k$ candidates acceptable for these voters. These candidates are also vertices of a $(k-1)$-simplex. Put the voters from $G_{(i,k)}$ in the circumcenter of this $(k-1)$-simplex.
    \end{enumerate}
    The construction for $\ell=3$ is illustrated in \Cref{fig:instances}.

    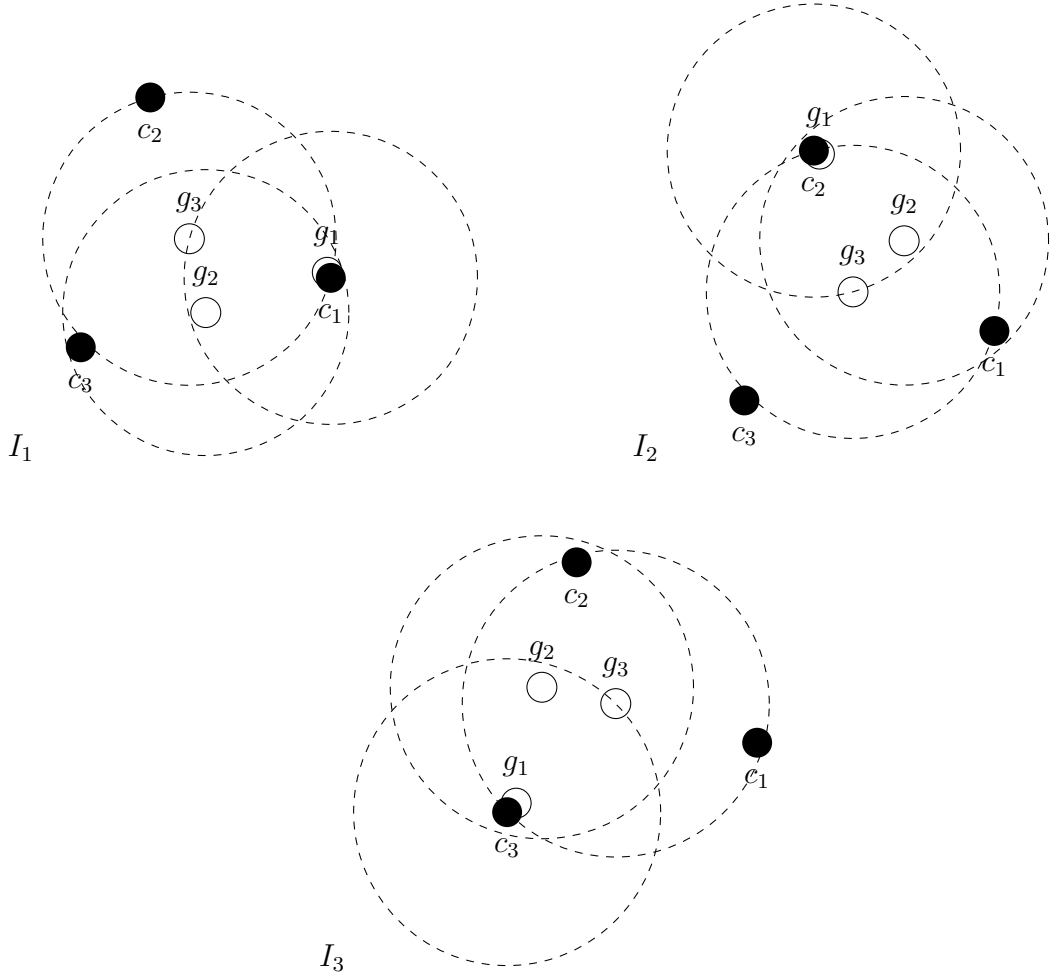
\begin{figure}[tbh]
    \begin{minipage}[t]{0.49\textwidth}
    \centering
    $I_1$
    \begin{tikzpicture}[scale=0.6]
    \node[circle, fill=black, label=below:$c_1$] at (4,0,0) {};
    \node[circle, fill=black, label=below:$c_2$] at (0,4,0) {};
    \node[circle, fill=black, label=below:$c_3$] at (0,0,4) {};

    \node[circle, draw=black, label=$g_1$] at (4,0.2,0.2) {};
    \node[circle, draw=black, label=$g_3$] at (1.41,1.41,1.41) {};
    \node[circle, draw=black, label=$g_2$] at (2,0,2) {};
    
    \node[circle, draw=black, style=dashed] at (4,0,0) [circle through={(1.41,1.41,1.41)}] {};
    \node[circle, draw=black, style=dashed] at (1.41,1.41,1.41) [circle through={(4,0,0)}] {};
    \node[circle, draw=black, style=dashed] at (2,0,2) [circle through={(2,3.17,2)}] {};
    \end{tikzpicture}
    \end{minipage}~\begin{minipage}[t]{0.49\textwidth}
    \centering
    $I_2$
    \begin{tikzpicture}[scale=0.6]
    \node[circle, fill=black, label=below:$c_1$] at (4,0,0) {};
    \node[circle, fill=black, label=below:$c_2$] at (0,4,0) {};
    \node[circle, fill=black, label=below:$c_3$] at (0,0,4) {};

    \node[circle, draw=black, label=$g_1$] at (0.2,4,0.2) {};
    \node[circle, draw=black, label=$g_3$] at (1.41,1.41,1.41) {};
    \node[circle, draw=black, label=$g_2$] at (2,2,0) {};
    
    \node[circle, draw=black, style=dashed] at (0,4,0) [circle through={(1.41,1.41,1.41)}] {};
    \node[circle, draw=black, style=dashed] at (1.41,1.41,1.41) [circle through={(0,4,0)}] {};
    \node[circle, draw=black, style=dashed] at (2,2,0) [circle through={(0,4.5,0)}] {};
    \end{tikzpicture}
    \end{minipage}
    \\~\\

    \centering
    \begin{minipage}[t]{0.49\textwidth}
    \centering
    $I_3$
    \begin{tikzpicture}[scale=0.6]
    \node[circle, fill=black, label=below:$c_1$] at (4,0,0) {};
    \node[circle, fill=black, label=below:$c_2$] at (0,4,0) {};
    \node[circle, fill=black, label=below:$c_3$] at (0,0,4) {};

    \node[circle, draw=black, label=$g_1$] at (0.2,0.2,4) {};
    \node[circle, draw=black, label=$g_3$] at (1.41,1.41,1.41) {};
    \node[circle, draw=black, label=$g_2$] at (0,2,2) {};
    
    \node[circle, draw=black, style=dashed] at (0,0,4) [circle through={(1.41,1.41,1.41)}] {};
    \node[circle, draw=black, style=dashed] at (1.41,1.41,1.41) [circle through={(0,0,4)}] {};
    \node[circle, draw=black, style=dashed] at (0,2,2) [circle through={(0,4.5,0)}] {};
    \end{tikzpicture}
    \end{minipage}
    \caption{The construction of $\mathcal{I}$ for $\ell=3$. White points correspond to the groups of voters, black ones correspond to the candidates. Circles mean acceptability spheres of the voters---as their length is common for all the voters, instances are globally consistent}
    \label{fig:instances}
    \end{figure}

    Now we will show that for each $i,k$, the construction of $G_{(i,k)}$ is correct---we need to verify the following conditions:
    \begin{enumerate}[\quad Cond 1:]
        \item The distance from each candidate $c_{i-k+1}, \ldots, c_i$ to $G_{(i,k)}$ does not exceed $R$,\label[condition]{cond1}
        \item The distance from any other candidate to $G_{(i,k)}$ exceeds $R$, \label[condition]{cond2}
        \item Ranking $c_{i-k+1} \succcurlyeq \ldots \succcurlyeq c_\ell \succcurlyeq c_1 \succcurlyeq \ldots \succcurlyeq c_{i-k}$ is consistent with the metric space for the voters from $G_{(i,k)}$.\label[condition]{cond3}
    \end{enumerate}
    Recall that for any regular $k$-simplex with length of edges equal to $1$ the length of the circumradius is equal to $\sqrt{\frac{k}{2(k+1)}}$ and the height of the simplex is equal to $\sqrt{\frac{k+1}{2k}}$ (for $k > 0$).
    
    Directly from these formulas we have some simple properties that hold for any regular $k$-simplex and $n$-simplex, both with the length of edges equal to $1$, $k < n$:
    \begin{enumerate}[\quad Property 1:]
        \item The circumradius of the $k$-simplex is smaller than the the circumradius of the $n$-simplex,\label[property]{circumradius_smaller}
        \item The height of the $k$-simplex is greater than the circumradius of the $n$-simplex.\label[property]{height_circumradius}
    \end{enumerate}

    Proof of \Cref{cond1}: Since $R$ is the circumradius of the $(\ell-1)$-simplex and the distance from any candidate $c_{i-k+1}, \ldots, c_i$ to $G_{(i,k)}$ equals the circumradius of a $(k-1)$-simplex for $k\leq \ell$, then from \Cref{circumradius_smaller} we have that this distance does not exceed $R$.

    Proof of \Cref{cond2}:
    Consider a candidate $c_x$ outside of the set $\{c_{i-k+1}, \ldots, c_i\}$. This candidate together with candidates $c_{i-k+1}, \ldots, c_i$ are vertices of a $k$-simplex. Consider now the height of this simplex dropped from $c_x$. From the properties of regular simplexes the foot of this height is the circumcenter of the $(k-1)$-simplex with vertices $c_{i-k+1}, \ldots, c_i$. Thus, this height equals the distance from $c_x$ to $G_{(i,k)}$. From \Cref{height_circumradius} we have that this distance is greater than $R$.

    Proof of \Cref{cond3}:
    From \Cref{cond1} and \Cref{cond2} we have that candidates $c_{i-k+1}, \ldots, c_i$ are closer to $G_{(i,k)}$ than any other candidate. Hence, they need to be put at the top of the rankings of all the voters from $G_{(i,k)}$. Moreover, each such a ranking is consistent with the metric space we constructed (all candidates from $\{c_{i-k+1}, \ldots, c_i\}$ have the same distance to $G_{(i,k)}$, and similarly all the candidates outside of this set). In particular, the ranking $c_{i-k+1} \succcurlyeq \ldots \succcurlyeq c_\ell \succcurlyeq c_1 \succcurlyeq \ldots \succcurlyeq c_{i-k}$ is consistent with the metric.

Now, let us move to the case when $\ell = 1$. Here, we construct the following two instances, $I_1$ and $I_2$. In both instances we have two candidates, $c_1$ and $c_2$, placed in the one-dimensional Euclidean space in points 0 and 3, respectively. Further:
\begin{enumerate}
\item In $I_1$ we have $\nicefrac{n}{2} + 1$ voters placed in point 1, and $\nicefrac{n}{2} - 1$ voters placed in point 3. The length of the acceptability radius for all the voters is equal to $2$, hence, all voters find $c_2$ acceptable, and only $\nicefrac{n}{2} + 1$ of them approve $c_1$. 
\item In $I_2$ we put $\nicefrac{n}{2} + 1$ voters in point 0, and $\nicefrac{n}{2} - 1$ in point 2. Similarly as in the previous case, we set the length of the acceptability radius to 2. Here, $c_1$ and $c_2$ are acceptable for, respectively, $n$ and $\nicefrac{n}{2} - 1$ voters.  
\end{enumerate}
Any deterministic rule cannot distinguish $I_1$ from $I_2$, so the winner will be the same in both instances. Thus, in one of them, we will get the ab-distortion of $\nicefrac{1}{2}$.

\end{proof}

\subsection{Proof of \Cref{Condorcet_upper_bound}}

\begin{reptheorem}{Condorcet_upper_bound}
Let $I$ be an instance where a Condorcet winner exists. Then, for each Condorcet consistent rule $\varphi$ we have $D_I(\varphi) \leq \frac{1}{2}$. This bound is achievable for a globally consistent $I\in \mathbb{E}^2$.
\end{reptheorem}
\begin{proof}
Let $c_w$ be the winner and $c_o$ be the optimal candidate. Since we assumed that the Condorcet candidate exists in $I$, we have that $c_w$ weakly dominates $c_o$. Then, we have:
    \begin{equation*}
        \frac{n}{2} \leq |P(c_w, c_o)| = n - |P(c_o, c_w)|
    \end{equation*}
Thus, $|P(c_o, c_w)| \leq \frac{n}{2}$. Further, we have that:
    \begin{align*}
      &|R_\lambda(c_o)| - |R_\lambda(c_w)| \\
      &\qquad = |R_\lambda(c_o) \setminus R_\lambda(c_w)| + |R_\lambda(c_o) \cap R_\lambda(c_w)| \\
      &\qquad\qquad - |R_\lambda(c_w) \setminus R_\lambda(c_o)| - |R_\lambda(c_o) \cap R_\lambda(c_w)| \\
      &\qquad = |R_\lambda(c_o) \setminus R_\lambda(c_w)| - |R_\lambda(c_w) \setminus R_\lambda(c_o)| \\
      &\qquad \leq |R_\lambda(c_o) \setminus R_\lambda(c_w)| \leq P(c_o, c_w) \leq \frac{n}{2}.
    \end{align*}
This completes the first part of the proof.

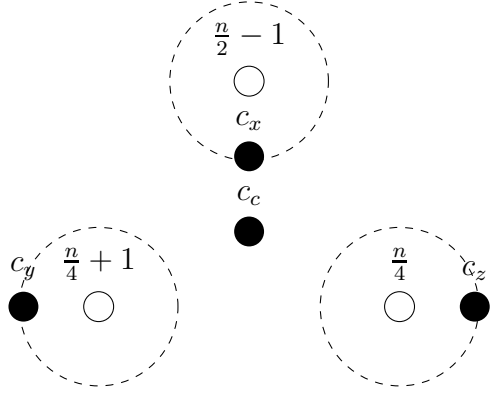
\begin{figure}[tp]
\centering
\begin{tikzpicture}
\node[circle, fill=black, label=$c_c$] at (3, 2) {};
\node[circle, fill=black, label=$c_x$] at (3, 3) {};
\node[circle, fill=black, label=$c_y$] at (0,1) {};
\node[circle, fill=black, label=$c_z$] at (6,1) {};

\node[circle, draw=black, label=$\frac{n}{2} - 1$] at (3,4) {};
\node[circle, draw=black, label=$\frac{n}{4} + 1$] at (1,1) {};
\node[circle, draw=black, label=$\frac{n}{4}$] at (5,1) {};

\node[circle, draw=black, minimum size=60pt, style=dashed] at (3,4) {};
\node[circle, draw=black, minimum size=60pt, style=dashed] at (1,1) {};
\node[circle, draw=black, minimum size=60pt, style=dashed] at (5,1) {};
\end{tikzpicture}
\caption{A hard instance witnessing the bound from \Cref{Condorcet_upper_bound} for $\ell=1$. White points correspond to groups of voters, black---to the candidates. As the length of the radius is common for all the acceptability balls, the instance is globally consistent.}
\label{fig:Condorcet}
\end{figure}

The hard instance is illustrated in \Cref{fig:Condorcet}. We have four candidates $C=\{c_x, c_y, c_z, c_c\}$. There are $\frac{n}{2} - 1$ voters with preferences $c_x \succcurlyeq c_c \succcurlyeq c_y, c_z$, approving only $c_x$, $\frac{n}{4} + 1$ voters with rankings $c_y \succcurlyeq c_c \succcurlyeq c_x, c_z$, approving only $c_y$, and $\frac{n}{4}$ voters with preferences $c_z \succcurlyeq c_c \succcurlyeq c_x, c_y$, approving only $c_z$. Candidate $c_c$ is the Condorcet winner and the optimal candidate is $c_x$. The distortion of each rule electing $c_c$ is $\frac{1}{2} - \frac{1}{n}$, which is arbitrarily close to $\frac{1}{2}$.
\end{proof}

\subsection{Proof of \Cref{Ranked_pairs_upper_bound}}

In the proof of the theorem, we will use the following definitions:
\begin{definition}
Let the \emph{immunity set} of instance $I$ be the set of candidates such that each candidate $c_a$ from this set satisfies the following condition for each $c_b \in C$: if $c_b$ dominates $c_a$, then there exists a beatpath from $c_a$ to $c_b$ which strength is greater or equal to $P(c_b, c_a)$.
\end{definition}
\begin{definition}
The election rule is \emph{immune} if for each instance it elects a candidate from the immunity set. 
\end{definition}
Note that for each instance the immunity set is the subset of the Smith set. Indeed, consider any election instance $I$. Let $c_a\in C$ belong to the immunity set of $I$ and $c_b$ belong to the Smith set of $I$. If $c_a$ weakly dominates $c_b$, then $c_a$ belongs to the Smith set. Otherwise, there exists a beatpath from $c_a$ to $c_b$. The last element of this beatpath dominates $c_b$, hence it belongs to the Smith set. If the $i$th element of the beatpath belongs to the Smith set, then so does the $i-1$th element (which dominates the $i$th one). Finally, we have that $c_a$ dominates the first element on the beatpath, hence $c_a$ belongs to the Smith set.

Now we would like to prove the following lemma:
\begin{lemma}\label[lemma]{immune_upper_bound}
The ab-distortion of each immune ranking-based election rule $\varphi$ is equal to:
    \begin{itemize}
        \item $\frac{\ell-1}{\ell}$ for $\ell \geq 2$,
        \item $\frac{1}{2}$ for $\ell = 1$,
    \end{itemize}
where $\ell$ is the size of the Smith set of considered instance.
\end{lemma}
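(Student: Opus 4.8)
The plan is to bound the ab-distortion of an immune rule $\varphi$ by $\frac{|P(c_o,c_w)|}{n}$, where $c_o$ is the acceptability-optimal candidate and $c_w=\varphi(p)$ is the elected one, and then to control $|P(c_o,c_w)|$ using the beatpath structure guaranteed by immunity. For the first reduction I would invoke local consistency of $\lambda$ exactly as in the proof of \Cref{Condorcet_upper_bound}: if a voter $i$ accepts $c_o$ but not $c_w$, then $d(i,c_w)>d(i,c_o)$, so $i\in P(c_o,c_w)$; hence $R_\lambda(c_o)\setminus R_\lambda(c_w)\subseteq P(c_o,c_w)$, which gives $|R_\lambda(c_o)|-|R_\lambda(c_w)|\le |R_\lambda(c_o)\setminus R_\lambda(c_w)|\le |P(c_o,c_w)|$. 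This reduces the whole lemma to showing $|P(c_o,c_w)|\le \frac{\ell-1}{\ell}\,n$ for $\ell\ge 2$, plus the separate case $\ell=1$.

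The case $\ell=1$ is immediate: a Smith set of size $1$ is a Condorcet winner, whose defining immunity condition (over candidates that dominate it) is vacuous, so it lies in the immunity set and $\varphi$ must elect it; \Cref{Condorcet_upper_bound} then yields the bound $\tfrac12$. For $\ell\ge 2$ I would split on whether $c_o$ dominates $c_w$. If it does not, then $|P(c_o,c_w)|\le \tfrac n2\le \frac{\ell-1}{\ell}\,n$ and we are finished. Otherwise $c_o$ dominates $c_w$; since $c_w$ lies in the immunity set and hence in the Smith set, and since a candidate dominating a Smith-set member must itself lie in the Smith set, $c_o$ is in the Smith set. By the immunity of $c_w$ there is a beatpath $c_w=y_0,y_1,\dots,y_k=c_o$ whose strength is at least $q:=|P(c_o,c_w)|$, i.e.\ every forward edge satisfies $|P(y_{j-1},y_j)|\ge q$. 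Applying the ``dominating a Smith member forces Smith membership'' observation backwards from $c_o$ along the path shows every $y_j$ is in the Smith set, and deleting any repeated vertex (which only restricts the set of edge-strengths and so can only raise the minimum) lets me assume the path is simple; thus it has at most $\ell$ vertices, whence $k\le \ell-1$.

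The heart of the argument is then a short counting step. Each forward edge gives $|P(y_j,y_{j-1})|=n-|P(y_{j-1},y_j)|\le n-q$. Any voter $v\in P(c_o,c_w)$ prefers the last vertex $c_o$ to the first vertex $c_w$, so by transitivity of $\succcurlyeq_v$ there is at least one index $j$ with $v$ preferring $y_j$ to $y_{j-1}$, i.e.\ $v\in P(y_j,y_{j-1})$. Hence $P(c_o,c_w)\subseteq\bigcup_{j=1}^{k}P(y_j,y_{j-1})$, and a union bound yields $q\le k(n-q)$, that is $q\le \frac{k}{k+1}\,n\le \frac{\ell-1}{\ell}\,n$, using $k\le\ell-1$ and the monotonicity of $\frac{k}{k+1}$. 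Combining the two subcases with the reduction above gives the desired upper bound $\frac{\ell-1}{\ell}$ for $\ell\ge 2$.

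Finally, the matching lower bound (tightness) is already supplied by \Cref{thm:smith_lower_bound}, whose symmetric hard instances force an ab-distortion of $\frac{\ell-1}{\ell}$ (resp.\ $\tfrac12$ when $\ell=1$) for \emph{every} ranking-based rule, immune rules included. The step I expect to require the most care is the beatpath manipulation of the previous paragraph: justifying that the beatpath provided by immunity can be taken simple and confined to the Smith set, so that its number of edges is at most $\ell-1$. Once that is pinned down, the inequality $q\le\frac{k}{k+1}\,n$ and the local-consistency reduction are routine.
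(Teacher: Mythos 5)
Your proposal is correct and follows essentially the same route as the paper: reduce the ab-distortion to $|P(c_o,c_w)|/n$ via local consistency, split on whether $c_o$ dominates $c_w$, use immunity to extract a beatpath of strength at least $|P(c_o,c_w)|$ whose vertices all lie in the Smith set, and close the cycle with a transitivity-based counting argument plus the tightness supplied by \Cref{thm:smith_lower_bound}. Your counting step is just the dual phrasing of the paper's (you union-bound $P(c_o,c_w)$ over the reversed edges, while the paper bounds the total weight of the $k$-edge cycle by $(k-1)n$), and your explicit reduction to a simple beatpath is a small point the paper leaves implicit.
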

\begin{proof}
    Let us denote the Smith set of the considered instance as $S$ and let $c_w$ be the winner according to $\varphi$. We have that:
    \begin{equation*}
        |R_\lambda(c_o)| - |R_\lambda(c_w)| \leq |R_\lambda(c_o) \setminus R_\lambda(c_w)| \leq |P(c_o, c_w)| \text{.}
    \end{equation*}
    
    In the further part of the proof we will upper bound $|P(c_o, c_w)|$. As the immunity set is a subset of the Smith set, it holds that $c_w \in S$.
    Let us consider two cases:
    \begin{enumerate}[\quad {Case} 1:]
    \item Candidate $c_w$ weakly dominates $c_o$. Then using the same reasoning as in the proof of \cref{Condorcet_upper_bound} we get the bound of $\nicefrac{1}{2}$. In this case we get the thesis, since for $\ell \geq 2$ we have that $\frac{1}{2} \leq \frac{\ell - 1}{\ell}$.
    
    \item Candidate $c_o$ dominates $c_w$. Then we have:
    \begin{equation*}
        |P(c_w, c_o)| \leq |P(c_o, c_w)|.
    \end{equation*}
    \end{enumerate}    

    We continue the analysis of Case 2. From the properties of the immunity set, there exists a vector of candidates $(c_{x_1}, c_{x_2}, \ldots, c_{c_k-2})$ for some $k \in \mathbb{N}$, such that $c_w$ dominates $c_{x_1}$, $c_{x_k-2}$ dominates $c_o$ and $\forall i \in \{1, \ldots, k-3\} \quad c_{x_i} \text{ dominates } c_{x_{i+1}}$. 
    Besides, this vector satisfies the following inequalities:
    \begin{align*}
        &|P(c_o, c_w)| \leq |P(c_w, c_{x_1})|, \\
        \forall i\in \{1, \ldots, k-3\} \quad &|P(c_o, c_w)|\leq |P(c_{x_i}, c_{x_{i+1}})|, \\
        &|P(c_o, c_w)| \leq |P(c_{x_{k-2}}, c_o)|.
    \end{align*}
    
    Summing up both sides of these $k-1$ inequalities, we have that:
    \begin{equation*}
        (k-1)|P(c_o, c_w)|
        \leq |P(c_w, c_{x_1})|+ \sum_{1 \leq i \leq k-3}{|P(c_{x_i}, c_{x_{i+1}})|} 
        + |P(c_{x_{k-2}}, c_o)|.
    \end{equation*}
    
This is equivalent to:
    \begin{align*}
        &k|P(c_o, c_w)|
        \leq |P(c_o, c_w)| + |P(c_w, c_{x_1})|
        + \sum_{1 \leq i \leq k-3}{|P(c_{x_i}, c_{x_{i+1}})|} 
        + |P(c_{x_{k-2}}, c_o)|  \iff \\
        &|P(c_o, c_w)|
        \leq \frac{|P(c_o, c_w)| + |P(c_w, c_{x_1})|
        + \sum_{1 \leq i \leq k-3}{|P(c_{x_i}, c_{x_{i+1}})|}.
        + |P(c_{x_{k-2}}, c_o)|}{k}
    \end{align*}
    
    Now let us consider the numerator in the right-hand side of the above inequality. Clearly, each of the $k$ elements of this sum can be upper-bounded by the number of the voters, $n$. However, observe that a voter $i \in N$ cannot be counted more than $k-1$ times, as her preferences are transitive and all the considered two-element vectors together make a cycle. Thus, we can upper-bound the sum by $(k-1)n$, and so we get that $|P(c_o, c_w)| \leq \frac{k-1}{k}n$.

    As $c_w \in S$ and $c_o$ dominates $c_w$, then also $c_o \in S$ and, consequently, for each $i$ we have that $c_{x_i} \in S$. Therefore, $k \leq |S|$.  Finally, we get that:
    \begin{equation*}
        \frac{|R_\lambda(c_o)|}{n} - \frac{|R_\lambda(c_w)|}{n} \leq \frac{(k-1)n}{kn} = \frac{k-1}{k} \leq \frac{|S|-1}{|S|}
    \end{equation*}

    The fact that this upper-bound is tight follows directly from \Cref{thm:smith_lower_bound}.
\end{proof}

Having this lemma proved, the proof of the main theorem is quite simple:

\begin{reptheorem}{Ranked_pairs_upper_bound}
For each election instance $I$, the ab-distortion of Ranked Pairs and the Schulze's rule is equal to:
    \begin{itemize}
        \item $\frac{\ell-1}{\ell}$ for $\ell \geq 2$,
        \item $\frac{1}{2}$ for $\ell = 1$,
    \end{itemize}
where $\ell$ is the size of the Smith set of $I$.
\end{reptheorem}
\begin{proof}
We will prove that both Ranked Pairs and the Schulze's rule are immune. Then the result will follow directly from \Cref{immune_upper_bound}.

Consider an election instance $I$, and let $c_{RP}$ be the winner according to Ranked Pairs in $I$, and $c$ be some other candidate. Assume that $c$ dominates $c_{RP}$. From the properties of Ranked Pairs, we know that the edge between $c$ and $c_{RP}$ has not been added by the election algorithm to the graph. Hence, there must have existed a path in this graph  from $c_{RP}$ to $c$, which had been added before pair $(c, c_{RP})$ was considered. Vertices on this path clearly form a beatpath from $c_{RP}$ to $c$. Besides, each edge on this path $(c_i, c_j)$ satisfies $P(c_i, c_j) \geq P(c, c_{RP})$, as it was considered by the algorithm before edge $(c, c_{RP})$. Hence, the strength of this beatpath from $c_w$ to $c$ is greater of equal to $P(c,c_{RP})$ and, consequently, $c_{RP}$ belongs to the immunity set of $I$.

Now, let us denote the winner according to the Schulze's rule by $c_S$. Assume that $c$ dominates $c_S$. Then $p[c,c_S] \geq P(c,c_S)$. On the other hand, from the properties of the Schulze's rule, we know that $p[c_S, c] \geq p[c,c_S]$. Hence, there exists a beatpath from $c_S$ to $c$ with the strength greater or equal to $p[c,c_S]$, and, consequently, also to $P(c,c_S)$. As a result, $c_S$ belongs to the immunity set of $I$.
\end{proof}

\subsection{Proof of \Cref{thm:distortion_copeland}}

\begin{reptheorem}{thm:distortion_copeland}
    For each $\epsilon > 0$, there exists a globally consistent instance $I \in \mathbb{E}^2$ for which the ab-distortion of the Copeland's rule exceeds $1-\epsilon$.  
\end{reptheorem}

\begin{proof}
    Let us fix $\epsilon > 0$, and consider the following instance $I$ with $n  > \frac{2}{\epsilon}$ voters and three candidates, $C=\{c_1, c_2, c_3\}$, $c_2 \succcurlyeq_{lex} c_1, c_3$.
    We put the candidates in $\mathbb{R}^2$ so that they are all vertices of an equilateral triangle. The length of the acceptability radius $R$ is the same for all the voters, and equals the length of the circumradius of the triangle. Finally, let us describe the positions of the voters: we put $\frac{n}{2} - 1$ of them in the same point as $c_1$ and set their preferences to $c_1 \succcurlyeq c_2 \succcurlyeq c_3$ (they approve only $c_1$). We put $\frac{n}{2} - 1$ voters in the middle of the segment between $c_1$ and $c_3$ and set their preferences to $c_3 \succcurlyeq c_1 \succcurlyeq c_2$ (they approve $c_1$ and $c_3$).
The remaining 2 voters are located in the circumcenter and have preferences $c_2 \succcurlyeq c_3 \succcurlyeq c_1$ (they approve $c_1$, $c_2$, and $c_3$).
    This instance is illustrated in \Cref{fig:Copeland}.
  
    In this instance, $c_1$ is the optimal candidate, approved by $n$ voters. However, we have that $c_1$ dominates $c_2$, $c_2$ dominates $c_3$ and $c_3$ dominates $c_1$. Hence, all the candidates are elected by the Copeland's rule and we need to use the lexicographical tie-breaking rule to choose the winner---which finally picks $c_2$, approved only by 2 voters. This gives the ab-distortion of $1 - \nicefrac{2}{n}$, which for $n  > \frac{2}{\epsilon}$ is greater than $1-\epsilon$.
\end{proof}

\begin{figure}[tbh]
\centering
    \begin{tikzpicture}[scale=0.6]
    \node[circle, fill=black, label=below:$c_1$] at (4,0,0) {};
    \node[circle, fill=black, label=below:$c_2$] at (0,4,0) {};
    \node[circle, fill=black, label=below:$c_3$] at (0,0,4) {};

    \node[circle, draw=black, label=$\frac{n}{2}-1$] at (2,0,2) {};
    \node[circle, draw=black, label=$\frac{n}{2}-1$] at (4,0.2,0.2) {};
    \node[circle, draw=black, label=$2$] at (1.41,1.41,1.41) {};
    
    \node[circle, draw=black, minimum size=190pt, style=dashed] at (2,0,2) {};
    \node[circle, draw=black, minimum size=190pt, style=dashed] at (4,0.2,0.2) {};
    \node[circle, draw=black, minimum size=190pt, style=dashed] at (1.41,1.41,1.41) {};
    \end{tikzpicture}
    \caption{The hard instance used in the proof of \Cref{thm:distortion_copeland}. White points correspond to groups of voters, black points---to candidates. Circles represent acceptability balls of the voters.}
\label{fig:Copeland}
\end{figure}
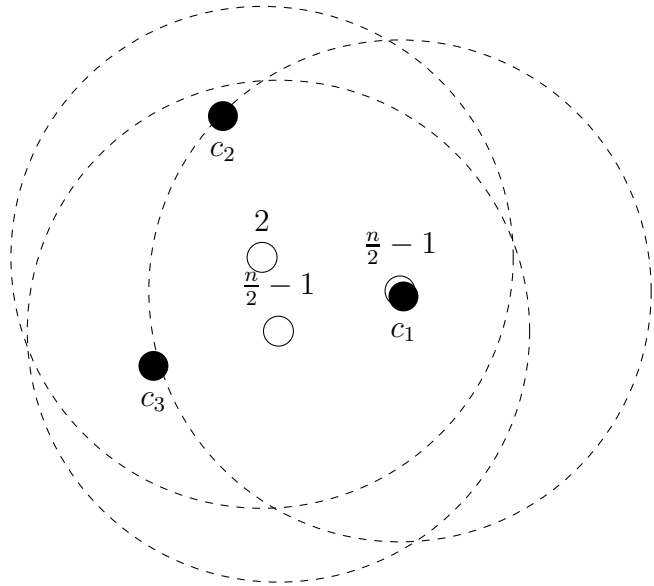

\subsection{Proof of \Cref{scoring_rules_upper_bound}}

\begin{reptheorem}{scoring_rules_upper_bound}
    For a scoring rule $\varphi$ defined by vector $\vec{s}=(s_1,\ldots,s_m)$ the ab-distortion of $\varphi$ satisfies:
    \begin{enumerate}
        \item $D_I(\varphi) = 1$, if $s_1 = \ldots = s_m$,
        \item $D_I(\varphi) \leq \frac{\max_{i,j}|s_i-s_j|}{\max_{i,j}|s_i-s_j| +  \min_{i,j}|s_i-s_j|}$, otherwise.
    \end{enumerate}
\end{reptheorem}
\begin{proof}
Let $c_w$ be the winner elected by $\varphi$ and $c_o$ be the optimal candidate. Let us consider the highest possible difference $\psi$ between the number of points gained by $c_w$ and $c_o$. Clearly $\psi \geq 0$. Besides, we know that for all voters from $P(c_o, c_w)$ candidate $c_w$ gained less points than $c_o$. Further, candidate $c_w$ could gain from each voter from this group at least $\min_{i,j}|s_i-s_j|=p$ points less than $c_o$. On the other hand, by considering the voters from $P(c_w, c_o)$ we infer that the highest possible difference in scores gained by the two candidates equals $\max_{i,j}|s_i-s_j||P(c_w, c_o)|=q|P(c_w, c_o)|$. Finally we have the following inequality:
\begin{align*}
    0 \leq \psi 
    & \leq -p|P(c_o, c_w)| + q|P(c_w, c_o)| \\
    &= -p|P(c_o, c_w)| + q(n - |P(c_o, c_w)|) \\
    &= qn - (p+q)|P(c_o, c_w)|
\end{align*}
Note that $p+q=0$ is equivalent to the condition $s_1 = \ldots = s_m$. Assume now that $p+q > 0$. In such a case we have that:
\begin{equation*}
    |P(c_o, c_w)| \leq \frac{q}{p+q}n
\end{equation*}
Further, as in the first displayed inequality in the proof of \Cref{Ranked_pairs_upper_bound}, we get that:
\begin{equation*}
    |R_\lambda(c_o)| - |R_\lambda(c_w)| \leq |R_\lambda(c_o) \setminus R_\lambda(c_w)| \leq |P(c_o, c_w)|.
\end{equation*}
Thus, the ab-distortion of $\varphi$ does not exceed $\frac{q}{p+q}$.
\end{proof}

\subsection{Proof of \Cref{thm:plurality_distortion}}

\begin{reptheorem}{thm:plurality_distortion}
The ab-distortion of Plurality is $\frac{m - 1}{m}$. This bound is achieved for globally consistent instances in $\mathbb{E}^1$.
\end{reptheorem}

\begin{proof}
    Note that the winner elected by Plurality needs to be the closest candidate for $\frac{n}{m}$ voters. Hence, from the non-emptiness and local consistency of the acceptability function, we have that $c$ is acceptable for at least $\frac{n}{m}$ voters. Thus, the distortion does not exceed $1 - \frac{1}{m} = \frac{m - 1}{m}$. This bound is tight for the instance constructed as follows:
    
Let $C=\{c_1, \ldots, c_m\}$ be the set of candidates, $c_1 \succcurlyeq_{lex} c_2 \succcurlyeq_{lex} \ldots \succcurlyeq_{lex} c_m$. We divide the voters into $m$ equal-size groups, each putting a different candidate on the top position in their preference rankings. All the candidates are acceptable for voters voting for $c_1$, and all the candidates except for $c_1$ are acceptable for the remaining voters. he illustration of a globally consistent instance in $\mathbb{E}^1$ satisfying the above conditions is presented in \Cref{fig:plurality}.

\begin{figure}[tp]
\centering
\begin{tikzpicture}
\node[label=$R-\epsilon$] at (1, 0) {};
\node[label=$R$] at (3, 0) {};
\node[label=$R$] at (5, 0) {};

\draw[style=dashed] (0,0) -- (6, 0) {};

\node[circle, fill=black, label=below:$c_1$] at (0,0) {};
\node[circle, fill=white, draw=black, label=below:$\frac{n}{m}$] at (2,0) {};

\node[circle, fill=black, label=below:{$c_2, \ldots, c_m$}] at (4,0) {};
\node[circle, fill=white, draw=black, label=below:{$\frac{n}{m},\ldots,\frac{n}{m}$}] at (6,0) {};

\end{tikzpicture}
    \caption{The hard instance used in the proof of \Cref{thm:plurality_distortion}. White points correspond to groups of voters, black---to the candidates. For the clarity of the presentation, candidates $c_2,\ldots,c_m$ and voters voting for them are associated with the same points. $R$ is the length of each acceptability radius. As this length is common for all the voters, the instance is globally consistent.}
\label{fig:plurality}
\end{figure}
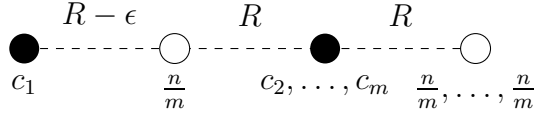

In this instance each candidate gains $\frac{m}{n}$ points and the winner is $c_1$ (due to the lexicographical tie-breaking rule). The optimal candidates are $c_2,\ldots,c_m$, and they are acceptable for $n$ voters. Hence, the distortion is $\frac{m-1}{m}$.
\end{proof}

\subsection{Proof of \Cref{scoring_rules_tight}}

\begin{repproposition}{scoring_rules_tight}
The bound from \Cref{scoring_rules_upper_bound} is tight for each scoring rule satisfying the following conditions:
\begin{enumerate}
    \item $s_1 \geq \ldots \geq s_m$,
    \item $\forall_{1 \leq i \leq m-1} \quad s_1 - s_2 \leq s_i - s_{i+1}$
\end{enumerate}
even for globally consistent instances in $\mathbb{E}^1$.
\end{repproposition}

\begin{proof}
Note that for the scoring rules satisfying the conditions of the proposition we have that $\max_{i,j}|s_i-s_j| = s_1 - s_m$ and $\min_{i,j}|s_i-s_j| = s_1 - s_2$. Let us use the same notation as in the proof of \Cref{scoring_rules_upper_bound}: $p=s_1-s_2$, $q=s_1-s_m$. 

First let us consider the case when $s_1=\ldots=s_m$, and consequently $p+q=0$. In such case, the bound $1$ is achieved by the following instance with $m$ candidates, $c_m \succcurlyeq_{lex} c_1, \ldots, c_{m-1}$.
All the voters have preferences $c_1 \succcurlyeq \ldots \succcurlyeq c_m$ and approve only $c_1$. This instance is illustrated in the upper row of \Cref{fig:equal_score}.
The optimal candidate is $c_1$, approved by $n$ voters. However, all the candidates received the same number of points, and therefore the winner is $c_m$, approved by no voter. 

In the further part of the proof we assume that $p+q>0$. Then the bound from \Cref{scoring_rules_upper_bound} is $\frac{q}{p+q}=\frac{s_1-s_m}{2s_1-s_2-s_m}$.
\begin{figure}[tp]
\centering
\begin{tikzpicture}
\node[label=$R+\epsilon$] at (2, 0) {};
\draw[style=dashed] (0,0) -- (4, 0) {};

\node[circle, fill=black, label=$c_1$] at (0,0.1) {};
\node[circle, fill=black, label={$c_2, \ldots, c_m$}] at (4,0) {};

\node[circle, draw=black, label=below:$n$] at (0,-0.1) {};

\end{tikzpicture}
\\
1. $s_1 \geq \ldots \geq s_m$
\\~\\~\\~\\
\begin{tikzpicture}

\node[label=$\epsilon$] at (1, 0) {};
\node[label=$R$] at (4, 0) {};
\node[label=$\epsilon$] at (7, 0) {};

\draw[style=dashed] (0,0) -- (8, 0) {};

\node[circle, fill=white, draw=black, label=below:$\frac{s_1-s_m}{2s_1-s_2-s_m}$] at (0,0) {};
\node[circle, fill=black, label=$c_1$] at (2,0) {};
\node[circle, fill=white, draw=black, label=below:$\frac{s_1-s_2}{2s_1-s_2-s_m}$] at (6,-0.1) {};
\node[circle, fill=black, label={$c_2$}] at (6,0.1) {};
\node[circle, fill=black, label=right:{$c_3, \ldots, c_m$}] at (8,0) {};

\end{tikzpicture}
\\~\\
2. $\forall_{1 \leq i \leq m-1} \quad s_1 - s_2 \leq s_i - s_{i+1}$
\\~\\

    \caption{The hard instances used in the proof of \Cref{scoring_rules_tight}. White points correspond to groups of voters, black points correspond to the candidates. $R$ is the length of each acceptability radius. As this length is common for all the voters, the instance is globally consistent.}
\label{fig:equal_score}
\end{figure}
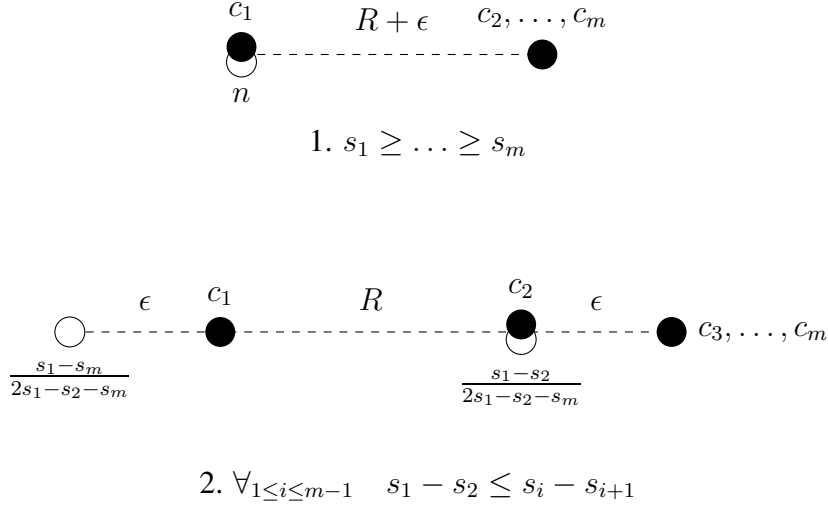

Now consider the following instance. We have $m$ candidates, $C=\{c_1,\ldots,c_m\}$, $c_2 \succcurlyeq_{lex} c_1$, and $n$ voters divided into two groups: $\frac{s_1-s_m}{2s_1-s_2-s_m}n$ of them have preferences $c_1 \succcurlyeq c_2 \succcurlyeq \ldots \succcurlyeq c_m$ and only $c_1$ is acceptable for them, and the remaining $\frac{s_1-s_2}{2s_1-s_2-s_m}n$ voters have preferences $c_2 \succcurlyeq \ldots \succcurlyeq c_m \succcurlyeq c_1$ with all the candidates being acceptable. An illustration of a globally consistent instance in $\mathbb{E}^1$ satisfying the above conditions is given in the lower row of \Cref{fig:equal_score}.

    In this instance $c_2$ gains $s_1$ points from $\frac{s_1-s_2}{2s_1-s_2-s_m}n$ voters and $s_2$ points from $\frac{s_1-s_m}{2s_1-s_2-s_m}n$ voters. The total number of points that $c_2$ received is:
        \begin{align*}
            s_1\frac{s_1-s_2}{2s_1-s_2-s_m}n &+ s_2\frac{s_1-s_m}{2s_1-s_2-s_m}n = \frac{s_1(s_1-s_2)+s_2(s_1-s_m)}{2s_1-s_2-s_m}n \\&= \frac{s_1^2-s_1s_2+s_1s_2-s_2s_m}{2s_1-s_2-s_m}n = \frac{s_1^2-s_2s_m}{2s_1-s_2-s_m}n \text{.}
        \end{align*}
   On the other hand, $c_1$ gains $s_1$ points from $\frac{s_1-s_m}{2s_1-s_2-s_m}n$ voters and $s_m$ points from $\frac{s_1-s_2}{2s_1-s_2-s_m}n$ voters. Its total score is equal to:
        \begin{align*}
            s_1\frac{s_1-s_m}{2s_1-s_2-s_m}n &+ s_m\frac{s_1-s_2}{2s_1-s_2-s_m}n = \frac{s_1(s_1-s_m)+s_m(s_1-s_2)}{2s_1-s_2-s_m}n \\&= \frac{s_1^2-s_1s_m+s_1s_m-s_2s_m}{2s_1-s_2-s_m}n = \frac{s_1^2-s_2s_m}{2s_1-s_2-s_m}n
        \end{align*}
   Since candidates $c_3,\ldots,c_m$ gain less points than $c_2$ (as they are farther away from each voter than $c_2$ and it holds that $s_1 \geq \ldots \geq s_m$), none of them can become the winner. Thus, $c_2$ is the winner of the election due to the lexicographical tie-breaking.

   The optimal candidate is $c_1$, and is acceptable for $n$ voters, while $c_2$ only for $\frac{s_1-s_2}{2s_1-s_2-s_m}n$ of them.
    Hence, the distortion is equal to $\frac{s_1-s_m}{2s_1-s_2-s_m}$.
\end{proof}

\subsection{Proof of \Cref{STV_upper_bound}}

\begin{reptheorem}{STV_upper_bound}
The ab-distortion of STV is $\frac{2^{m-1}-1}{2^{m-1}}$.
\end{reptheorem}

\begin{proof}
Let $c_w$ be the winner according to STV. Note that, the winner is the only candidate that survived to the the last round and in this round gains $n$ votes. The number of rounds equals $m$.

Note that if a candidate in some round gains $k$ points (according to Plurality), then in the next round it gains at most $2k$ points. Indeed, otherwise this candidate would need to gain at least $k+1$ points transferred from the removed candidate, so STV would not remove the candidate with the least number of points, a contradiction.
Consequently, we have that $c_w$, who has $n$ points after $m$ rounds, needs to get at least $\frac{n}{2^{m-1}}$ points in the first round. Similarly as in case of Plurality, from the non-emptiness and local consistency of the acceptability function, we have that $c_w$ is acceptable for at least $\frac{n}{2^m}$ voters. Hence, the distortion does not exceed $1 - \frac{1}{2^{m-1}}=\frac{2^{m-1}-1}{2^{m-1}}$. We will prove the tightness in \Cref{prop:tightness_stv}.
\end{proof}

\subsection{Proof of \Cref{prop:tightness_stv}}

\begin{repproposition}{prop:tightness_stv}
The bound from \Cref{STV_upper_bound} is tight for locally consistent instances from $\mathbb{E}^{1}$ and globally consistent instances from $\mathbb{E}^{m-2}$.
\end{repproposition}
\begin{proof}

\begin{figure}[tbh]
\centering
\begin{tikzpicture}
\draw[style=dashed] (0,0) -- (8, 0) {};
\node[label=1] at (1.5, 0) {};
\node[label=2] at (3, 0) {};
\node[label=4] at (6, 0) {};

\node[circle, fill=black, label=$c_m$] at (1,0.1) {};
\node[circle, draw=black, label=below:$\frac{n}{2^{m-1}}$] at (1,-0.1) {};

\node[circle, fill=black, label=$c_1$] at (2,0.1) {};
\node[circle, draw=black, label=below:$\frac{n}{2^{m-1}}$] at (2,-0.1) {};

\node[circle, fill=black, label=$c_2$] at (4,0.1) {};
\node[circle, draw=black, label=below:$\frac{n}{2^{m-2}}$] at (4,-0.1) {};

\node[circle, fill=black, label=$c_3$] at (8,0.1) {};
\node[circle, draw=black, label=below:$\frac{n}{2^{m-3}}$] at (8,-0.1) {};

\node[circle, fill=black, minimum size=0.5mm, inner sep=0] at (9,0) {};
\node[circle, fill=black, minimum size=0.5mm, inner sep=0] at (9.25,0) {};
\node[circle, fill=black, minimum size=0.5mm, inner sep=0] at (9.5,0) {};

\node[circle, fill=black, label=$c_{m-1}$] at (12,0.1) {};
\node[circle, draw=black, label=below:$\frac{n}{2}$] at (12,-0.1) {};

\end{tikzpicture}
\\~\\
1. A locally consistent instance from $\mathbb{E}^{1}$.
\\~\\
    \begin{tikzpicture}[scale=0.6]
    
    \node[circle, draw=black, label=below:$\frac{n}{2}$] at (0,4) {};
    \node[circle, draw=black, label=below:$\frac{n}{8}$] at (0,-0.1) {};
    \node[circle, draw=black, label=below:$\frac{n}{4}$] at (-3.5,-2) {};
    \node[circle, draw=black, label=below:$\frac{n}{8}$] at (3.5, -2) {};
    
    \node[circle, fill=black, label=$c_4$] at (0,8) {};
    \node[circle, fill=black, label=$c_1$] at (0,0.1) {};
    \node[circle, fill=black, label=$c_3$] at (-7,-4) {};
    \node[circle, fill=black, label=$c_2$] at (7,-4) {};

    \node[circle, draw=black, style=dashed] at (0,4) [circle through={(0,0)}] {};
    \node[circle, draw=black, style=dashed] at (0,0) [circle through={(0,4)}] {};
    \node[circle, draw=black, style=dashed] at (-3.5,-2) [circle through={(0,0)}] {};
    \node[circle, draw=black, style=dashed] at (3.5,-2) [circle through={(0,0)}] {};
    \end{tikzpicture}
\\~\\
2. A globally consistent instances from $\mathbb{E}^{m-2}$.
\\~\\
    \caption{The hard instances used in the proof of \Cref{STV_upper_bound}. White points correspond to groups of voters, black points---to the candidates. In the upper instance $c_1$ is acceptable for all the voters, and $c_m$ is acceptable only for the voters from the first group to the left.}
\label{fig:stv}
\end{figure}

First, we construct a locally consistent instance in $\mathbb{E}^1$. We have $m$ candidates $C=\{c_1,\ldots,c_m\}$, $c_m \succcurlyeq_{lex} c_1, \ldots, c_{m-1}$. The candidates $c_m, c_1, \ldots, c_{m-1}$ are associated respectively with points $1, 2, 4, 8, \ldots, 2^m$. The voters are divided into $m$ groups; the first group consists of $\frac{n}{2^{m-1}}$ voters, and for each $i \geq 2$ the $i$
-th group consists of $\frac{n}{2^{m-i+1}}$ voters. These groups also also associated, respectively, with the points $1, 2, 4, 8, \ldots, 2^m$. Candidate $c_1$ is acceptable for all the voters, and $c_m$ is acceptable only for the $\frac{n}{2^{m-1}}$ voters from the first group. This instance is depicted in the upper row of \Cref{fig:stv}.

In the first round, candidate $c_m$ gains $\frac{n}{2^{m-1}}$ points, $c_1$ gains $\frac{n}{2^{m-1}}$ points, $\ldots$, $c_{m-1}$ gains $\frac{n}{2}$ points. The removed candidate is $c_1$, due to the lexicographical tie-breaking. In the second round, all the voters who voted in the first round for $c_1$ vote for $c_m$ (as their distance to $c_m$ is $1$ while the distance to $c_2$ is $2$, and other candidates are farther than $c_2$). Hence, $c_m$ gains in total $\frac{n}{2^{m-2}}$ points and the number of points of other candidates do not change. The removed candidate is $c_2$ with $\frac{n}{2^{m-2}}$ points. In the next rounds, the removed candidates are respectively $c_3,\ldots, c_m$, and all their points are always transferred to $c_m$. The winner of the election is $c_m$, and the optimal candidate is $c_1$. Since $c_1$ is acceptable for $n$ voters, and $c_m$ only for $\frac{n}{2^{m-1}}$, we get the distortion of $\frac{2^{m-1}-1}{2^{m-1}}$.

Now, let us construct a globally consistent instance from $\mathbb{E}^{m-2}$. We have $C=\{c_1,\ldots,c_m\}$, $c_m \succcurlyeq_{lex} c_1, \ldots, c_{m-1}$, and all the candidates except for $c_1$ are vertices of a regular $(m-2)$-simplex in $\mathbb{R}^{m-2}$; $c_1$ is located in the circumcenter of the simplex---let us denote this point as $O$ and the length of the circumradius of this simplex as $R$. The voters are divided into $m$ groups ($g_1,\ldots,g_m$) as follows: $g_1$ consists of $\frac{n}{2^{m-1}}$ voters and is associated with the point $O$; for $i>1$, $g_i$ consists of $\nicefrac{n}{2^{m-i}}$ voters and is associated with the point in the middle of the segment between $c_i$ and $O$ (the distance from $g_i$ to $c_1$ equals the distance from $g_i$ to $c_i$ and equals $\nicefrac{R}{2}$). The acceptability radius for each voter is equal to $\nicefrac{R}{2}$. The example of this instance for $m=4$ is presented in the lower row of \Cref{fig:stv}.

For $i>1$, every voter from $g_i$ accepts only $c_i$ and $c_1$. For each $j \neq 1, i$, consider the triangle $\{c_i, c_j, g_i\}$. Recall that for any regular $k$-simplex with the edge length equal to  $1$ the circumradius is equal to $\sqrt{\frac{k}{2(k+1)}}$, which is strictly less than 1. From this formula we obtain that $R < d(c_i,c_j)$. Hence, from the triangle inequality we have that:
\begin{equation*}
    d(g_i,c_j) \geq d(c_i,c_j) - d(c_i,g_i) = d(c_i,c_j) - \frac{R}{2} > \frac{R}{2}
\end{equation*}
Hence, $c_j$ is not acceptable for $g_i$. Voters from $g_1$ accept only $c_1$ (as $d(g_1,c_1)=0$ and for each $i>1$ $d(g_1,c_i)=R$).

From the symmetry of the construction, we have that for each $i>1$, all the candidates except for $c_1$ and $c_i$ are in the same distance from $g_i$. Hence, we can assume that voters from $g_i$ have ranking $c_i \geq c_1 \geq c_m \geq c_2,\ldots,c_{m-1}$

STV applied to this rule gives exactly the same result as for the previous instance. We eliminate respectively candidates $c_1, c_2, \ldots, c_{m-1}$---and all the voters who voted for $c_i$ in the $i$th round, transfer their vote to $c_m$. Hence, finally $c_m$ will be selected as the winner (acceptable only for voters from $g_m$), while $c_1$ is the optimal candidate (acceptable for all the voters).
Consequently, we get the distortion of $\frac{2^{m-1}-1}{2^{m-1}}$. 
\end{proof}

\end{document}